\documentclass[a4paper,UKenglish,cleveref, autoref, thm-restate,bookmarks,unicode,colorlinks=true,pdfencoding=auto]{lipics-v2021}
\interfootnotelinepenalty=10000

\usepackage{amsfonts}
\usepackage{amsmath}
\usepackage{amssymb}

\usepackage{float}
\usepackage{tikz}
\usetikzlibrary{intersections, calc, arrows, automata, positioning,shapes, trees}
\usepackage{bbding}
\usepackage{pifont}
\usepackage{pgf, verbatim}

\usepackage{setspace}
\usepackage[linesnumbered,ruled]{algorithm2e}

   \def\@citecolor{blue}%
   \def\@urlcolor{blue}%
   \def\@linkcolor{blue}%
\def\orcidID#1{\smash{\href{http://orcid.org/#1}{\protect\raisebox{-1.25pt}{\protect\includegraphics{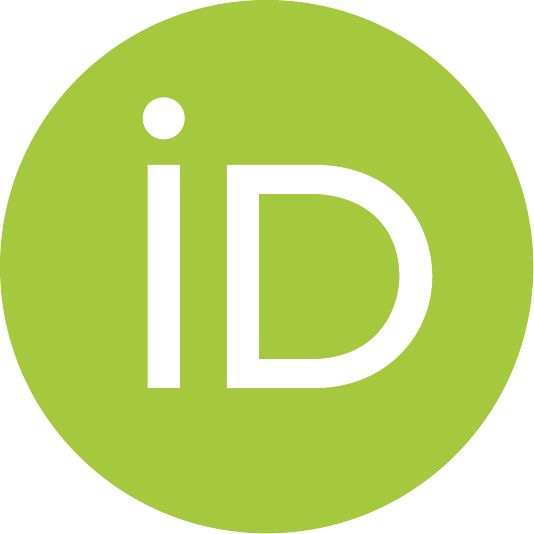}}}}}



\newcommand{\supp}{{\mathsf{support}}}
\newcommand{\pfun}{\mathrel{\ooalign{\hfil$\mapstochar\mkern5mu$\hfil\cr$\to$\cr}}} 
\newcommand{\Dist}{\mathrm{Distr}} 
\newcommand{\SubDist}{\mathrm{SubDistr}} 
\newcommand{\suchthat}{\;\ifnum\currentgrouptype=16 \middle\fi|\;} 
\newcommand{\Paths}{{\rm Paths}} 

\newcommand{\D}{{\mathcal D}} 
\newcommand{\A}{{\mathcal A}}
\newcommand{\B}{{\mathcal B}}
\newcommand{\M}{{\mathcal M}}
\newcommand{\C}{{\mathcal C}} 
\newcommand{\G}{{\mathcal G}} 

\newcommand{\lang}{{\mathcal L}} 
\newcommand{\m}{{\sf m}} 
\newcommand{\Act}{{\mathsf{Act}}} 
\newcommand{\Prob}{{\sf P}} 
\newcommand{\nat}{{\mathbb N}} 

\definecolor{OliveGreen}{rgb}{0,0.6,0}




\pdfoutput=1 
\hideLIPIcs  


\bibliographystyle{plainurl}

\title{On the Succinctness of Good-for-MDPs Automata} 

\author{Sven Schewe}{University of Liverpool, UK}{sven.schewe@liverpool.ac.uk}{https://orcid.org/0000-0002-9093-9518}{}


\author{Qiyi Tang}{University of Liverpool, UK}{qiyi.tang@liverpool.ac.uk}{https://orcid.org/0000-0002-9265-3011}{}


\authorrunning{Sven Schewe and Qiyi Tang} 

\Copyright{Sven Schewe and Qiyi Tang} 

\begin{CCSXML}
	<ccs2012>
	<concept>
	<concept_id>10003752.10003766.10003770</concept_id>
	<concept_desc>Theory of computation~Automata over infinite objects</concept_desc>
	<concept_significance>500</concept_significance>
	</concept>
	<concept>
	<concept_id>10002950.10003648.10003700.10003701</concept_id>
	<concept_desc>Mathematics of computing~Markov processes</concept_desc>
	<concept_significance>500</concept_significance>
	</concept>
	</ccs2012>
\end{CCSXML}

\ccsdesc[500]{Theory of computation~Automata over infinite objects}
\ccsdesc[500]{Mathematics of computing~Markov processes}

\keywords{Good-for-MDPs automata, Good-for-games automata, B\"uchi automata, Markov Decision Processes, Omega-regular objectives, Reinforcement learning}






\nolinenumbers 

\EventEditors{John Q. Open and Joan R. Access}
\EventNoEds{2}
\EventLongTitle{42nd Conference on Very Important Topics (CVIT 2016)}
\EventShortTitle{CVIT 2016}
\EventAcronym{CVIT}
\EventYear{2016}
\EventDate{December 24--27, 2016}
\EventLocation{Little Whinging, United Kingdom}
\EventLogo{}
\SeriesVolume{42}
\ArticleNo{23}

\begin{document}\sloppy

\maketitle

\begin{abstract}
Good-for-MDPs and good-for-games automata are two recent classes of nondeterministic automata that reside between general nondeterministic and deterministic automata.
Deterministic automata are good-for-games, and good-for-games automata are good-for-MDPs, but not vice versa.
One of the question this raises is how these classes relate in terms of succinctness.
Good-for-games automata are known to be exponentially more succinct than deterministic automata, but the gap between good-for-MDPs and good-for-games automata as well as the gap between ordinary nondeterministic automata and those that are good-for-MDPs have been open.
We establish that these gaps are exponential, and sharpen this result by showing that the latter gap remains exponential when restricting the nondeterministic automata to separating safety or unambiguous reachability automata.
\end{abstract}

\section{Introduction}
Good-for-games (GfG) and good-for-MDPs (GfM) automata are automata with a special form of nondeterminism that can be resolved on-the-fly.
In the case of good-for-games automata this needs to be possible in every context, while it only needs to be possible for GfM automata in the context of finite MDPs.
This makes the restrictions posed on GfM automata milder.
One of the implications is that GfM automata are less complex than GfG automata, as GfM B\"uchi automata can recognise all regular languages.
But are they also significantly more succinct?

We show that this is the case, establishing that GfM automata are exponentially more succinct than GfG automata, where exponentially more succinct means that a translation of automata with $n$ states (or transitions) require automata with $2^{\Omega(n)}$ states in the worst case.

To show this, we develop a family $\G_n$ of automata with $n+2$ states and $3n+7$ transitions, 
and show that the smallest language equivalent deterministic B\"uchi automaton (DBAs) has at least $2^n / (n+2)$ states. As GfG B\"uchi automata are only quadratically more succinct than DBAs \cite{KuperbergS15}, this implies that GfG auatomata recognising the language are in $\Omega(\sqrt{2^n/n})$ (or: $\Omega(2^{n/2}/\sqrt{n})$).

We then continue to show a similar gap between B\"uchi automata with general nondeterminism and those that are GfM, producing two families of automata, $\mathcal R_n$ and $\mathcal S_n$, with $n+2$ states and $2n+5$ 
transitions and $n+1$ states and $2n$ transitions, respectively, and show that a language equivalent GfM automaton requires at least $2^n$ states to recognise either of these languages.

This is not only close to the known translation to slim automata \cite{HPSSTWBP2020} that results in $3^n$ states (and no more than $2$ successors for any state letter pair), but it provides more: each $\mathcal R_n$ is an unambiguous%
\footnote{An \emph{unambiguous} automaton has, for every $\omega$-word, no more than one accepting run}
reachability automaton, while each $\mathcal S_n$ is a safety automaton, which is strongly unambiguous%
\footnote{A \emph{strongly unambiguous} automaton has, for every $\omega$-word, no more than one run}
and separating%
\footnote{An automaton is \emph{separated} if different states have disjoint languages; separating automata are therefore in particular unambiguous.}. 
Moreover, we show the languages recognised by $\mathcal R_n$ are also recognised by a separating B\"uchi automaton $\mathcal R_n'$ with $n+2$ states and $3(n+2)$ transitions.

That the succinctness gap extends to these smaller classes of automata is not only interesting from a principle point of view, unambiguous automata appear naturally both as a standard translation from LTL, and in model checking Markov chains \cite{BaierKKMW23}.
Being unambiguous is thus a different type of restriction to nondeterminism, very different in nature, and yet with related fields of application to GfM.
It is also interesting to note that strong disambiguation is more expensive ($\Omega((n-1)!)$, \cite{KarmarkarJC13}) than translating a nondeterministic B\"uchi automaton into one that is GfM ($O(3^n)$, cf.\ \cite{Courco90,Hahn15,Sicker16b,HPSSTWBP2020}).

\section{Preliminaries}
We write $\nat$ for the set of nonnegative integers. 
Let $S$ be a finite set. 
We denote by $\Dist(S)$ ($\SubDist(S)$) the set of probability (sub)distributions on~$S$. 
For a (sub)distribution $\varphi \in \Dist(S)$ (resp.\ $ \mu \in \SubDist(S)$), we write $\supp(\varphi) = \{s \in S \suchthat \varphi(s) > 0 \}$ for its support. 
The cardinal of $S$ is denoted $|S|$. 
We use $\Sigma$ to denote a finite alphabet. 
We denote by $\Sigma^{*}$ the set of finite words over $\Sigma$ and $\Sigma^{\omega}$ the set of $\omega$-words over $\Sigma$. 
We use the standard notions of prefix and suffix of a word. 
By $w\alpha$ we denote the concatenation of a finite word $w$ and an $\omega$-word $\alpha$. 

\subsection{Automata}\label{subsection:preliminaries-automata} 

A \emph{nondeterministic} 
word automaton 
is a tuple $\A= (\Sigma, Q, q_0, \delta, F)$, where $\Sigma$ is a finite alphabet, $Q$ is a finite set of states, $q_0 \in Q$ is the initial state, $\delta: Q \times \Sigma \to 2^Q$ is the transition function, and $F \subseteq Q$ is the set of final (or accepting) states%
\footnote{Especially for GfG automata, the decision of putting acceptance on states or transitions can be quite important for some questions, e.g.\ the minimisation of CoB\"uchi GfG automata is tractable for transition based acceptance~\cite{RadiK22}, but not for state based acceptance \cite{Schewe20}. For establishing exponential succinctness gaps, however, it is not relevant, as the translation from the more succinct transition based to state based acceptance can only double the size of an automaton.}. 
A nondeterministic word automaton is \emph{complete} if $\delta(q, \sigma) \neq \emptyset$ for all $q \in Q$ and $\sigma \in \Sigma$.
For a state $q \in Q$, we denote $\A_q$ the NBA obtained from $\A$ by making $q$ the initial state.

A finite run $r$ of $\A$ on $w \in \Sigma^{*}$ is a finite word $q_0, w_0, q_1, w_1, \ldots \in Q \times (\Sigma  \times Q)^{*}$ such that $q_i \in \delta(q_{i-1} , w_{i-1})$ for all $i > 0$.
A nondeterministic finite automaton (NFA) $\A$ accepts exactly those runs, in which it ends in a state in~$F$.
A word in $\Sigma^*$ is accepted by the automaton if it has an accepting run, and the language of an automaton, denoted $\lang(\A)$, is the set of accepted words in $\Sigma^{*}$.  
A family of NBAs is given in \cref{fig:NFA}.

A run $r$ of $\A$ on $w \in \Sigma^{\omega}$ is an $\omega$-word $q_0, w_0, q_1, w_1, \ldots \in (Q \times \Sigma)^{\omega}$ such that $q_i \in \delta(q_{i-1} , w_{i-1})$ for all $i > 0$.
A nondeterministic B\"uchi word automaton (NBA) $\A$ accepts exactly those runs, in which at least one of the infinitely often occurring states is in~$F$. 
A word in $\Sigma^\omega$ is accepted by the automaton if it has an accepting run, and the language of an automaton, denoted $\lang(\A)$, is the set of accepted words in $\Sigma^{\omega}$.  
Two example NBAs are given in \cref{fig:GFM-G1,fig:NSA-S1}.

A nondeterministic word automaton is called a \emph{safety automaton} if all states are final (and therefore all runs accepting), and a \emph{reachability automaton} if there is only a single final state, say $f$, and that final state is a sink, i.e.\ $\delta(f,\sigma) = \{f\}$ for all $\sigma \in \Sigma$.

\begin{figure}[t]
\centering
\begin{tikzpicture} [->, node distance = 2cm, auto,initial text = {}]
    \node[initial left,state] (q0) at (0,1) {$q_0$};
    \node[state] (q1) at (2,1) {$q_1$};
    \node[state, accepting] (qf) at (4,1) {$f$};

    \path (q0) edge [loop above] node {$0,1,\$$} (q0);
    \path (q0) edge [bend left] node [above] {$1$} (q1);
    \path (q1) edge node [below]{$\$$} (qf);
    \path (q1) edge [bend left] node [above] {$0,1$} (q0);
    \path (qf) edge [bend left=35] node [below] {$0,1,\$$}  (q0);

\end{tikzpicture}
\caption{The NBA $\G_1$ which recognises the language $\lang_1^{\omega}$ in which there are infinitely many finite words from $\lang_1 = \{0, 1\}^{*}1\$$. This NBA is GfM. It is neither GfG nor unambiguous (or separated).}
\label{fig:GFM-G1}
\end{figure}
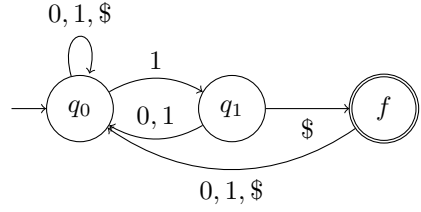

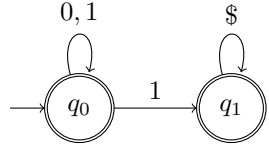
\begin{figure}[ht]
\centering
\begin{tikzpicture} [->, node distance = 2cm, auto,initial text = {}]
    \node[initial left,state,accepting] (q0) at (0,1) {$q_0$};
    \node[state, accepting] (qf) at (2,1) {$q_1$};

    \path (q0) edge [loop above] node {$0,1$} (q0);
    \path (q0) edge node {$1$} (qf);
    \path (qf) edge [loop above] node {$\$$} (qf);
\end{tikzpicture}
\caption{
The safety NBA $\mathcal S_1$, which recognises the language $\lang_1^s = \{0, 1\}^{*}1\$^{\omega} \cup \{0,1\}^\omega$ that contains all words over the alphabet $\{0,1,\$\}$ that contain either no $\$$, or that contain a $1$ that is succeeded by only $\$$s.
This NBA is separated (and thus unambiguous), but neither GfG nor GfM.
}
\label{fig:NSA-S1}
\end{figure}


A nondeterministic word automaton is \emph{deterministic} if the transition function $\delta$ maps each state and letter pair to a singleton set, a set consisting of a single state. 
For deterministic automata, we also extend $\delta$ to words, by letting $\delta(q, \varepsilon) = q$ and $\delta(q, w_{0} w_{1} \cdots w_{n}) = \delta(\delta(q, w_{0}), w_1 \cdots w_{n})$, where we have $n \geq 1$ and $w_{i} \in \Sigma$ for $i \in \{0, \cdots, n\}$.

A nondeterministic automaton is called \emph{good-for-games (GfG)} if it only relies on a limited form of nondeterminism: GfG automata can make their decision of how to resolve their nondeterministic choices on the history at any point of a run rather than using the knowledge of the complete word as a nondeterministic automaton normally would without changing their language. They can be characterised in many ways, including as automata that simulate deterministic automata.  


The NBA in \cref{fig:GFM-G1} is good-for-MDPs (GfM) as shown later in \cref{lem:gfm}, but neither GfG nor unambiguous (or separated). 
The NBA in \cref{fig:NSA-S1} is neither GfG nor good-for-MDPs (GfM) as shown later at the end of \cref{subsection:preliminaries-product-mdp-gfm}, while it is both unambiguous and separated.

\subsection{Markov Decision Processes (MDPs)}\label{subsection:preliminaries-mdp}

A \emph{(finite, transition-labelled) Markov decision process} (MDP) is a tuple $\langle S, \Act, \Prob, \Sigma , \ell\rangle$ consisting of a finite set $S$ of states, a finite set $\Act$ of actions,
a partial function 
$\Prob: S \times \Act \times \Sigma \pfun \SubDist(S)$ 
denoting the probabilistic transition, and
a labelling function $\ell: S \times \Act \times \Sigma \times  S \to \Sigma$ such that $\ell(s, \m , \sigma, s') = \sigma$.
We have $\sum_{\sigma \text{ s.t. } \Prob(s, \m, \sigma) \text{ is defined} }\Prob(s, \m, \sigma) \le 1$. 
The set of available actions in a state $s$ is $\Act(s) = \{\m \in \Act \suchthat \exists \sigma \in \Sigma \text{ such that } \Prob(s, \m, \sigma) \text{ is defined}\}$. 
An MDP is a Markov chain (MC) if $|\Act(s)| = 1$ for all $s \in S$. 

\begin{figure}[ht]
\centering
\begin{tikzpicture} [->, node distance = 2cm, auto,initial text = {}]
    \node[initial left,state] (q0) at (0,1) {$s_0$};
    \node[state] (qn) at (2,1) {$s_1$};
    \node[state] (qf) at (4,1) {$s_f$};

    \path (q0) edge node {$\textcolor{red}{1}:0$} (qn);
    \path (qn) edge [loop above] node {$\textcolor{red}{\frac{1}{4}}: 0$} (qn);
    \path (qn) edge [loop below] node {$\textcolor{red}{\frac{1}{4}}: 1$} (qn);
    \path (qn) edge node {$\textcolor{red}{\frac{1}{2}}: \$$} (qf);
    \path (qf) edge [loop above] node {$\textcolor{red}{1}:\$$} (qf);

\end{tikzpicture}
\caption{
An MDP with initial state $s_0$. Since each state has only one available action $\m$ (not shown in the figure), the MDP is actually an MC. 
The set of labels is $\{0, 1, \$\}$ and the labelling function for the MC is as follows: $\ell(s_0, \m, 0, s_1) = \ell(s_1, \m, 0, s_1) = 0$, $\ell(s_1, \m, 1,s_1) = 1$, $\ell(s_1, \m, \$, s_f) = \ell(s_f, \m, \$, s_f) = \$$. 
The probabilities of the transitions are indicated in red before the labels. 
The chance that it produces a word in $\lang_1^s$ recognised by $\mathcal S_1$ from \cref{fig:NSA-S1} is $\frac{1}{4}$.
}
\label{fig:example-MC}
\end{figure}
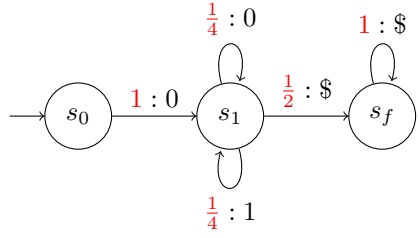


An infinite \emph{run (path)} of an MDP $\M$ is a sequence $s_0\m_1\sigma_1\ldots \in (S \times \Act \times \Sigma)^\omega$ such that $\Prob(s_i, \m_{i+1}, \sigma_{i+1})$ is defined and $\Prob(s_i, \m_{i+1}, \sigma_{i+1})(s_{i+1}) > 0$ for all $i \ge 0$.
A finite run is a finite such sequence.
Let $\Omega(\M)$ ($\Paths(\M)$) denote the set of (finite) runs in $\M$ and $\Omega(\M)_s$ ($\Paths(\M)_s$) denote the set of (finite) runs in $\M$ starting from $s$.
Abusing the notation slightly, for an infinite run $r = s_0\m_1\sigma_1s_1\m_2\sigma_2\ldots$ we write $\ell(r) = \sigma_1\sigma_2\ldots \in \Sigma^{\omega}$.

A \emph{strategy} for an MDP is a function $\mu: \Paths(\M) \to \Dist(\Act)$ that, given a finite run~$r$, returns a probability distribution on all the available actions at the last state of $r$. 
A \emph{deterministic} strategy for an MDP is a function $\mu: \Paths(\M) \to \Act$ that, given a finite run~$r$, returns a single available action at that state. 
The set of runs $\Omega(\M)_s^{\mu}$ is a subset of $\Omega(\M)_s$ that correspond to strategy $\mu$ and initial state $s$.
Given a finite-memory strategy $\mu$ for $\M$, an MC $(\M)_{\mu}$ is induced \cite[Section~10.6]{Baier08}.


A \emph{sub-MDP} of $\M$ is an MDP $\M' = \langle S', \Act', \Prob', \Sigma , \ell' \rangle$, where $S' \subseteq S$, $\Act' \subseteq \Act$ is such that $\Act'(s) \subseteq \Act(s)$ for every $s \in S'$, and $\Prob'$ and $\ell'$ are analogous to $\Prob$ and $\ell$ when restricted to $S'$ and $\Act'$.  In particular, $\M'$ is closed under probabilistic transitions, i.e. for all $s \in S'$, $\m \in \Act'$ and $\sigma \in \Sigma$ we have that $\Prob'(s, \m, \sigma)(s') > 0$ implies that $s' \in S'$. An \emph{end-component} \cite{Alfaro1998,Baier08} of an MDP $\M$ is a sub-MDP $\M'$ of $\M$ such that its underlying graph is strongly connected\footnote{A set of states is strongly connected if every pair of states are mutually reachable.} and it has no outgoing transitions. An example MDP (MC) is presented in Figure~\ref{fig:example-MC}.

A strategy $\mu$ and an initial state $s \in S$ induce a standard probability measure on sets of infinite runs, see, e.g., \cite{Baier08}. Such measurable sets of infinite runs are called events or objectives. We write $\mathrm{Pr}_s^{\mu}$ for the probability of an event $E \subseteq sS^{\omega}$ of runs starting from $s$.


\subsection{The Product of MDPs and NBAs}\label{subsection:preliminaries-product-mdp-gfm}

Given an MDP $\M = \langle S, \Act, \Prob, \Sigma, \ell \rangle$ with initial state $s_0 \in S$ and an NBA $\A = \langle \Sigma, Q, \delta, q_0, F \rangle$, we want to compute an optimal strategy satisfying the objective that the run of $\M$ is in the language of $\A$. We define the semantic satisfaction probability for $\A$ and a strategy $\mu$ from state $s \in S$ as:
$\mathrm{PSem}_{\A}^{\M}(s, \mu) = \mathrm{Pr}_s^{\mu}\{ r \in \Omega_s^{\mu}: \ell(r) \in \lang(\A)\}$ and $\mathrm{PSem}_{\A}^{\M}(s) = \sup_{\mu} \big( \mathrm{PSem}_{\A}^{\M}(s, \mu) \big)$. In the case that $\M$ is an MC, we simply have $\mathrm{PSem}_{\A}^{\M}(s) = \mathrm{Pr}_s\{ r \in \Omega_s: \ell(r) \in \lang(\A)\}$.

\begin{figure*}[t]
\centering
\begin{tikzpicture} [xscale=.7,shorten >=1pt,node distance = 1cm, on grid, auto,initial text = {}]
    \usetikzlibrary{positioning,arrows,automata}
    \tikzstyle{BoxStyle} = [draw, circle, fill=black, scale=0.4,minimum width = 1pt, minimum height = 1pt]
    \node[initial left,state, accepting] (s0q0) at (-5,0) {$s_0, q_0$};
    \node[BoxStyle] (act00) at (-3,0) {};
    \node[label] at (-3, 0.35) {$(\m,q_0)$};

    \node[state, accepting] (s1q0) at (0,0) {$s_1, q_0$};
    \node[BoxStyle] (act1) at (2,0) {};
    \node[label] at (3,0) {$(\m,q_0)$};
    \node[BoxStyle] (act2) at (0,-1.2) {};
    \node[label] at (-1,-1.2) {$(\m, q_1)$};
    
    \node[state, accepting] (s1q1) at (0,-3) {$s_1, q_1$};
    \node[BoxStyle] (act11) at (2,-3) {};
    \node[label] at (2,-2.65) {$(\m,q_2)$};
    
    \node[state, accepting] (sfq1) at (5,-3) {$s_f, q_1$};				
    \node[BoxStyle] (actf1) at (6.5,-3) {};
    \node[label] at (7.5,-3) {$(\m,q_1)$};
    \node[label] at (6.5,-3.5) {$\textcolor{red}{1}:\$$};

    \path [-] (s0q0) edge [above] node {} (act00);
    \path [->] (act00) edge [below, midway] node {$\textcolor{red}{1}:0$} (s1q0);

    \path [-] (s1q0) edge [above] node {} (act1);
    \path [-] (s1q0) edge [above] node {} (act2);
    \path [->] 
    (act1) edge [bend left, below] node {$\textcolor{red}{\frac{1}{4}}:1$} (s1q0)
    (act1) edge [bend right, above] node {$\textcolor{red}{\frac{1}{4}}:0$} (s1q0);
    \path [->] (act2) edge [left] node {$\textcolor{red}{\frac{1}{4}}:1$} (s1q1);

    \path [-] (s1q1) edge [above] node {} (act11);
    \path [->] (act11) edge [below] node {$\textcolor{red}{\frac{1}{2}}:\$$} (sfq1);

    \path [->] 
    (sfq1) edge [loop right, near end] node {} (sfq1);
    
\end{tikzpicture}
\caption{An example of a product MDP $\M \times {\mathcal S}_1$ with initial state $(s_0, q_0)$ and $F^{\times} = \{(s_1, q_1), (s_f, q_1)\}$ where $\M$ is the MDP (MC) in \cref{fig:example-MC} and ${\mathcal S}_1$ is the safety NBA from \cref{fig:NSA-S1}. The maximum chance for the product MDP of generating a word in $\lang(\mathcal{S}_1) = \lang_n^s$ is $\frac{1}{8}$. }
\label{fig:product-MDP}
\end{figure*}
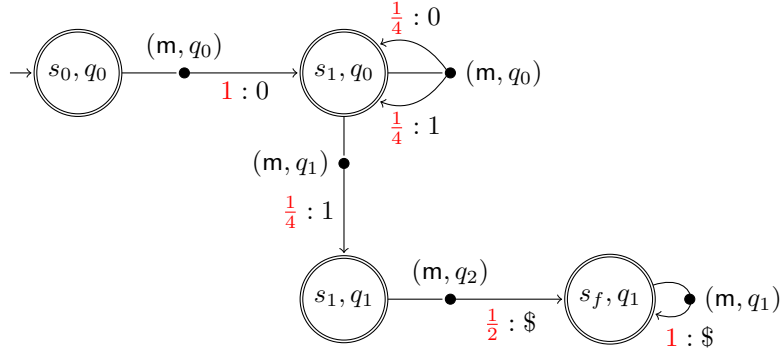

The \emph{product} of $\M$ and $\A$ is an MDP $\M \times \A= \langle S \times Q, \Act \times Q, \Prob^{\times}, \Sigma, \ell^{\times} \rangle$ augmented with the initial state $(s_0, q_0)$ and the B\"uchi acceptance condition $F^{\times}
= \{(s, q) \in S \times Q \suchthat q \in F \}$. The labelling function $\ell^{\times}$ maps each transition $\big( (s, q), (\m, q'), (s',q') \big)$ to $\ell\big((s, \m, s')\big)$. 

We define the partial function $\Prob^{\times}: (S \times Q) \times (\Act \times Q) \pfun \Dist(S \times Q)$ as follows: for all $(s, \m) \in \supp(\Prob)$
, $s' \in S$ and $q, q' \in Q$,  we have  
$$\Prob^{\times}\big((s, q), (\m, q'), \sigma \big) \big( (s', q') \big) = 
\left \{ \begin{array}{ll}
    \Prob(s, \m, \sigma)(s') & \mbox{if $q' \in \delta(q, \sigma)$}\\
    0 & \mbox{otherwise.}
    \end{array}
\right .
$$


We define the syntactic satisfaction probability for the product MDP and a strategy $\mu^{\times}$ from a state $(s, q)$ as:
$\mathrm{PSyn}_{\A}^{\M} \big( (s, q), \mu^{\times}\big) = \mathrm{Pr}_{s, q}^{\mu^{\times}}\{ r \in \Omega_{s, q}^{\mu^{\times}}: \ell^{\times}(r) \in \lang(\A) \}$\footnote{Let $\mathrm{inf} (r)$ be the set of states that appears infinite often in a run $r$. We also have $\mathrm{PSyn}_{\A}^{\M}( (s, q), \mu^{\times}) = \mathrm{Pr}_{s, q}^{\mu^{\times}}\{ r \in \Omega_{s, q}^{\mu^{\times}}: \mathrm{inf} (r) \cap F^{\times} \neq \emptyset\}$.} and $\mathrm{PSyn}_{\A}^{\M}(s) = \sup_{\mu^{\times}} (\mathrm{PSyn}_{\A}^{\M}\big( (s,q_0), \mu^{\times})\big)$. 
The set of actions is $\Act$ in the MDP $\M$ while it is $\Act \times Q$ in the product MDP. This makes $\mathrm{PSem}$ and $\mathrm{PSyn}$ potentially different. In general, $\mathrm{PSyn}_{\A}^{\M}(s) \le \mathrm{PSem}_{\A}^{\M}(s)$ for all $s \in S$, because accepting runs can only occur on accepted words. 
If $\A$ is deterministic, $\mathrm{PSyn}_{\A}^{\M}(s) = \mathrm{PSem}_{\A}^{\M}(s)$ holds for all $s \in S$. 

End-components and runs are defined for products just like for MDPs. A run of $\M \times \A$ is accepting if it satisfies the product’s acceptance condition. An accepting end-component of $\M \times \A$ is an end-component which contains some states in $F^{\times}$. 

\begin{definition}
    \cite{HPSSTWBP2020}
    \label{def:GFM}
    An NBA $\A$ is good-for-MDPs (GfM) if, for all finite MDPs $\M$ with initial state $s_0$, $\mathrm{PSyn}_{\A}^{\M}(s_0) = \mathrm{PSem}_{\A}^{\M}(s_0)$ holds.  
\end{definition}

An example of a product MDP is presented in \cref{fig:product-MDP}. It is the product of the MC in Figure~\ref{fig:example-MC} and the safety NBA in \cref{fig:NSA-S1}.
Note that we have omitted rejecting sink states, so that some outgoing probabilities do not sum up to $1$.
The chance that the MC produces a word in $\lang_1^s = \lang(\mathcal{S}_1)$ recognised by the automaton from \cref{fig:NSA-S1} is $\frac{1}{4}$. However, the syntactic satisfaction probability $\mathrm{PSyn}_{\mathcal{S}_1}^{\M}(s_0) = \frac{1}{8}$ is witnessed by the memoryless strategy that chooses the action $(\m,q_1)$ at the state $(s_1,q_0)$. We do not need to specify the strategy for the other states since there is only one available action for any remaining state. According \cref{def:GFM}, the reachability NBA $\mathcal{S}_1$ in \cref{fig:NSA-S1} is not GfM as witnessed by the MC in Figure~\ref{fig:example-MC}. 

\section{Outline}

To establish our main result that
\begin{quote}
GfM automata are exponentially more succinct than GfG automata,
\end{quote}
we proceed in three steps.
We first recall the standard result that NFAs are exponentially more succinct than deterministic finite automata (DFAs), and use a textbook example of a family of languages that witnesses this, namely
\[\lang_n = \{0, 1\}^{*}1\{0, 1\}^{n-1}\$\ ,\]
the language of those words $w \in \{0, 1\}^{*}\$$, where the $n^{\mbox{\scriptsize th}}$ letter before $\$$ in $w$ is $1$.
NFAs that recognise this language need $n+2$ states, while DFAs need $>2^n$.
\medskip

In a second step, we then build an NBA from the standard NFA for this language that recognises the $\omega$-language of infinite words that contain infinitely many sequences of the form $1\, \{0,1\}^{n-1} \$$,
such that the resulting automaton has the same states as the NFA, and almost the same structure (cf.\ \cref{fig:NFA,fig:NBA}).
We call this language $\lang_n^\omega$ and show that the resulting NBA that recognises it is GfM.

This $\omega$-language proves to have many useful properties:
\begin{itemize}
    \item it is B\"uchi recognisable (i.e.\ recognised by a \emph{deterministic} B\"uchi automaton) and
    \item it is a tail language, such that all reachable states of a DBA recognising it are language equivalent and complete.
\end{itemize}

It is indeed easy to obtain a DBA that recognises $\lang_n^\omega$ from a DFA that recognises $\lang_n$ in the same way as we obtained a (GfM) NBA for the NFA that recognises $\lang_n$.
\smallskip

For our third and final step we observe that, should the resulting DBA be minimal---or should a minimal DBA be broadly of the same size---then we are essentially done: a minimal DBA would be exponential in the size of the minimal GfM NBA, and using that GfG NBA are only quadratically more succinct than DBAs \cite{KuperbergS15}, our main result would follow.

The heavy lifting we need to do is therefore to establish just this: that a minimal DBA that recognises $\lang_n^\omega$ cannot be significantly smaller than a minimal DFA that recognises $\lang_n$.
\bigskip

After having established our main result, we also show that
    general nondeterministic automata are exponentially more succinct than good-for-MDP automata.
This even holds when the languages are restricted to reachability or safety languages, and indeed if the automata are restricted to unambiguous reachability automata or separating safety automata.

The languages we use to demonstrate this are the extension of $\lang_n$ to reachability,
\[\lang_{n}^r = \{w \{0,1,\$\}^{\omega} \mid w \in \lang_n\}\ .\]
Instead of using its safety closure,
\[\lang_{n}' = \lang_n^r \cup \{0, 1\}^{\omega}\ , \]
we use the safety language
\[\lang_{n}^s = \{w \$^{\omega} \mid w \in \lang_n\} \cup  \{0, 1\}^{\omega}\ , \]
because this translates more natural into a separating automaton.

We show that $\lang_n^r$ resp.\ $\lang_n^s$ are recognised by a nondeterministic reachability resp.\ safety automaton with $n+2$ states, while we show that a GfM automaton for these languages needs to have $2^n$ states, not counting accepting or rejecting sinks.

We then obtain the refined results by showing that the safety automaton is separating, while the reachability automaton is unambiguous.
We finally show that the reachability language $\lang_n^r$ can be recognised by a separating NBA $\mathcal R_n'$; however, this automaton is not, and cannot be, a reachability automaton.

\section{GfM vs GfG}

\subsection{The language $\lang_n^{\omega}$ and the GfM NBA $\G_n$}
 
For an arbitrary $n \in \nat$, define the language
$$\lang_n = \{w \in \{0, 1\}^{*}\$ \mid \text { the $n^{\mbox{\scriptsize th}}$ letter before $\$$ in $w$ is $1$}\}\ ,$$
where we assume for clarity that all words of length $\leq n$ are not in $\lang_n$. 

\begin{claim}
For any $n$, there is an NFA with $n+2$ states recognising $\lang_{n}$. 
\end{claim}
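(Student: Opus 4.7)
The plan is to exhibit the standard textbook NFA for this language, which is essentially the finite-word analogue of the NBA $\G_1$ shown in \cref{fig:GFM-G1}. Concretely, I take state set $Q = \{q_0, q_1, \ldots, q_n, f\}$, with $q_0$ as the initial state and $f$ as the sole accepting state, and transition function
\[
\delta(q_0, 0) = \{q_0\}, \quad \delta(q_0, 1) = \{q_0, q_1\}, \quad \delta(q_i, \sigma) = \{q_{i+1}\} \text{ for } 1 \le i < n,\ \sigma \in \{0,1\},
\]
together with $\delta(q_n, \$) = \{f\}$ and all other transitions empty. The automaton has $n+2$ states by construction.

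The idea is that $q_0$ nondeterministically keeps scanning the input, while a transition to $q_1$ represents a \emph{guess} that the current position is the $n^{\mbox{\scriptsize th}}$ letter before the final \$. The chain $q_1 \to q_2 \to \cdots \to q_n$ then counts off the remaining $n-1$ letters, and the final \$ is consumed to reach the accepting state $f$. The fact that the transition into $q_1$ is only available on reading the letter $1$ enforces the defining property of $\lang_n$.

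For correctness I would verify the two inclusions. For $\lang_n \subseteq \lang(\A)$, given any $w = u\, 1\, v\, \$ \in \lang_n$ with $|v| = n-1$, the run that loops at $q_0$ while reading $u$, then takes the nondeterministic branch to $q_1$ on the distinguished $1$, then passes through $q_2, \ldots, q_n$ on $v$, and finally moves to $f$ on $\$$, is accepting. For $\lang(\A) \subseteq \lang_n$, any accepting run must terminate in $f$, and the only transition entering $f$ is $\delta(q_n,\$)$; tracing backwards, the run must have been in $q_0$ up to some position, taken the edge $q_0 \to q_1$ on a letter $1$, and then spent exactly $n-1$ further transitions in the chain before the closing $\$$. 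Counting positions, this forces the $n^{\mbox{\scriptsize th}}$ letter before the $\$$ of the input word to be $1$, and the word itself to have the form required by $\lang_n$.

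There is no real obstacle here: the construction is a direct adaptation of the classical ``$k^{\mbox{\scriptsize th}}$-letter-from-the-end'' NFA, with the extra terminator \$ making the acceptance trigger unambiguous. The only thing to be careful about is the side convention that words of length $\leq n$ are excluded from $\lang_n$, which is automatically respected because the chain $q_1 \to \cdots \to q_n \to f$ consumes $n+1$ letters including the terminating \$.
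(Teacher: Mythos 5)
Your construction is exactly the automaton $\A_n$ the paper exhibits (same states $q_0,\ldots,q_n,f$, same transitions, including the nondeterministic guess $\delta(q_0,1)=\{q_0,q_1\}$), and your correctness argument for both inclusions is the standard one the paper leaves implicit. The proposal is correct and takes essentially the same approach as the paper.
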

Let $\Sigma = \{0, 1, \$\}$. Define the NFA $\A_n = (\Sigma, Q, q_0, \delta_\A, \{f\})$ where $Q = \{q_0, \ldots, q_n, f\}$ and  $\delta_\A$ is defined as follows:
$\delta_\A (q_0, 0) = \{q_0\}$; 
$\delta_\A (q_0, 1) = \{q_0, q_1\}$; 
$\delta_\A (q_i, \sigma) = \{q_{i+1}\}$ for all $i \in \{1, \ldots, n-1\}$ and $\sigma \in \{0, 1\}$;
$\delta_\A (q_n, \$) = \{f\}$.

\begin{figure}[t]
\centering
\begin{tikzpicture} [->, node distance = 2cm, auto,initial text = {}]
    \node[initial left,state] (q0) at (0,1) {$q_0$};
    \node[state] (q1) at (2,1) {$q_1$};
    \node[state] (q2) at (4,1) {$q_2$};
    \node[state] (q3) at (6,1) {$q_3$};
    \node (qi) at (7,1) {$\ldots$};
    \node[state] (qn) at (8,1) {$q_n$};
    \node[state, accepting] (qf) at (10,1) {$f$};

    \path (q0) edge [loop above] node {$0,1$} (q0);
    \path (q0) edge node {$1$} (q1);
    \path (q1) edge node {$0,1$} (q2);
    \path (q2) edge node {$0,1$} (q3);
    \path (qn) edge node {$\$$} (qf);
\end{tikzpicture}
\caption{The NFA $\A_n$ that recognises $\lang_n$.}
\label{fig:NFA}
\end{figure}

\begin{figure}[t]
\centering
\begin{tikzpicture} [->, node distance = 2cm, auto,initial text = {}]
    \node[initial left,state] (q0) at (0,1) {$q_0$};
    \node[state] (q1) at (2,1) {$q_1$};
    \node[state] (q2) at (4,1) {$q_2$};
    \node[state] (q3) at (6,1) {$q_3$};
    \node (qi) at (7,1) {$\ldots$};
    \node[state] (qn) at (8,1) {$q_n$};
    \node[state, accepting] (qf) at (10,1) {$f$};

    \path (q0) edge [loop above] node {$0,1,\$$} (q0);
    \path (q0) edge [bend left] node [above] {$1$} (q1);
    \path (q1) edge [bend left] node [below]{$0,1$} (q2);
    \path (q1) edge [bend left] node [above] {$\$$} (q0);
    \path (q2) edge [bend left] node [below] {$0,1$} (q3);
    \path (q2) edge [bend left] node [below, near start, yshift=0.2em]{$\$$} (q0);
    \path (q3) edge [bend left]  node [below, near start, yshift=0.2em]{$\$$} (q0);
    \path (qn) edge node {$\$$} (qf);
    \path (qn) edge [bend left] node [below, near start, yshift=0.1em] {$0,1$}  (q0);
    \path (qf) edge [bend left] node [below, near start] {$0,1,\$$}  (q0);

\end{tikzpicture}
\caption{The NBA $\G_n$ that recognises $\lang_n^{\omega}$.}
\label{fig:NBA}
\end{figure}


Since it is a standard textbook language, we give the following theorem without proof. 
The intuition is that a DFA needs to store the previous $n$ letters in their order of occurrence in order to recognise this language.

\begin{theorem}\label{theorem:DFA}
For all $n$, any DFA recognising $\lang_{n}$ must have at least $2^n$ states. 
\end{theorem}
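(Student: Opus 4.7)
The plan is to apply a standard Myhill--Nerode style distinguishability argument: exhibit a family of $2^n$ strings that are pairwise distinguishable with respect to $\lang_n$, so that any DFA recognising $\lang_n$ must reach a distinct state on each of them. The natural candidate family is $\{0,1\}^n$, the set of all binary strings of length exactly~$n$, whose cardinality is $2^n$.

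Given two distinct strings $u, v \in \{0,1\}^n$, I would pick a position $k \in \{1, \ldots, n\}$ at which they disagree; without loss of generality $u_k = 1$ and $v_k = 0$. I then construct a common distinguishing suffix $z = 0^{k-1}\$$, of length $k$. A direct length computation shows that $uz \in \lang_n$ while $vz \notin \lang_n$: the word $uz$ has $n + k - 1$ letters before its unique $\$$, so the $n$-th letter before $\$$ sits at position $(n + k - 1) - n + 1 = k$ from the left of $uz$, which lies within the prefix $u$ and equals $u_k = 1$; the analogous computation on $vz$ yields $v_k = 0$. The edge cases $k=1$ (where $z = \$$) and $k=n$ (where $z = 0^{n-1}\$$) both go through without change.

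With this distinguishing suffix in hand, the rest is immediate: if a DFA $\D$ recognising $\lang_n$ were to reach the same state after reading two distinct $u, v \in \{0,1\}^n$, then determinism would force $\D$ to either accept both $uz$ and $vz$ or reject both, contradicting $\lang(\D) = \lang_n$. Hence the map sending $u \in \{0,1\}^n$ to $\delta(q_0, u)$ is injective, and $\D$ must have at least $2^n$ reachable states. The only real subtlety is the bookkeeping on indices together with ensuring that the distinguishing suffix ends with a single $\$$ and contains no other $\$$, so that $uz$ has the right shape to be a candidate member of $\lang_n$ in the first place; the choice $z = 0^{k-1}\$$ handles both concerns by construction, so I do not expect any genuine obstacle beyond this careful indexing.
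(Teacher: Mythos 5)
Your proposal is correct: the Myhill--Nerode/fooling-set argument on $\{0,1\}^n$ with distinguishing suffixes $0^{k-1}\$$ goes through exactly as you compute, and it is precisely the standard textbook argument the paper invokes when it states \cref{theorem:DFA} without proof (``a DFA needs to store the previous $n$ letters''). No gaps; the index bookkeeping and the edge cases $k=1$ and $k=n$ are handled correctly.
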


\begin{claim}
For any $n$, there is an NBA with $n+2$ states recognising $\lang_{n}^\omega$. 
\end{claim}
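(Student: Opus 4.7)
The plan is to take the NBA $\G_n = (\Sigma, Q, q_0, \delta_\G, \{f\})$ depicted in \cref{fig:NBA}, with $Q = \{q_0, q_1, \ldots, q_n, f\}$ (so $|Q| = n+2$), and show directly that $\lang(\G_n) = \lang_n^\omega$. The transition function is obtained from the NFA $\A_n$ of \cref{fig:NFA} by adding edges that let the automaton return to $q_0$ and restart the guessing process: $\delta_\G(q_0,\$) = \{q_0\}$, $\delta_\G(q_i,\$) = \{q_0\}$ for $1 \le i \le n-1$, $\delta_\G(q_n,\{0,1\}) = \{q_0\}$, and $\delta_\G(f,\sigma) = \{q_0\}$ for every $\sigma \in \Sigma$; on $\{0,1\}$ the states $q_0,\dots,q_{n-1}$ behave as in $\A_n$, and $q_n \xrightarrow{\$} f$.

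For the inclusion $\lang_n^\omega \subseteq \lang(\G_n)$, I would argue by construction of an accepting run. If $\alpha \in \lang_n^\omega$, then by definition there exist infinitely many positions $i_1 < i_2 < \cdots$ in $\alpha$ such that the letter $\alpha[i_k]$ is $\$$ and $\alpha[i_k - n] = 1$. The strategy for the run is to stay in the self-loop at $q_0$ until we reach position $i_k - n$, at which point we take the nondeterministic branch $q_0 \xrightarrow{1} q_1$, then deterministically traverse $q_1 \to q_2 \to \cdots \to q_n$ over the next $n-1$ letters in $\{0,1\}$, then take $q_n \xrightarrow{\$} f$, and finally $f \to q_0$ on the following letter. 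Iterating this for all $k$ produces a run that visits $f$ infinitely often.

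For the converse inclusion $\lang(\G_n) \subseteq \lang_n^\omega$, I would analyse the structure of the states preceding $f$ in any accepting run $r$ on $\alpha$. Since $f$ has only $q_n \xrightarrow{\$} f$ as incoming transition, and the only way into $q_n$ is through a chain $q_0 \xrightarrow{1} q_1 \xrightarrow{\{0,1\}} q_2 \xrightarrow{\{0,1\}} \cdots \xrightarrow{\{0,1\}} q_n$ whose transitions consume exactly one letter each from $\{0,1\}$, every visit of $r$ to $f$ after reading the prefix $\alpha[0\,..\,i]$ forces $\alpha[i] = \$$ and $\alpha[i-n] = 1$, hence that prefix lies in $\lang_n$. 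Infinitely many visits to $f$ then yield infinitely many prefixes in $\lang_n$, so $\alpha \in \lang_n^\omega$.

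I do not expect any real obstacle here: the construction is a direct $\omega$-lifting of the standard NFA for $\lang_n$, and the verification is essentially bookkeeping on the chain $q_0 \to q_1 \to \cdots \to q_n \to f$. The only point that requires a line of care is ensuring that the counting is correct, i.e., that the edge $q_0 \to q_1$ is labelled only by $1$ (so the $1$ sits exactly $n$ letters before the accepting $\$$) and that $q_1,\ldots,q_{n-1}$ consume only letters of $\{0,1\}$ (so no $\$$ shortens the chain); both follow from the definition of $\delta_\G$ above.
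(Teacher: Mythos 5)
Your automaton is exactly the paper's $\G_n$ and both inclusions are argued along the same lines, but the forward direction $\lang_n^\omega \subseteq \lang(\G_n)$ has a gap as written: the instruction to track \emph{every} witnessing position $i_k$ fails when two occurrences of the pattern $1\{0,1\}^{n-1}\$$ are adjacent. After the run reaches $f$ at position $i_k$, the mandatory transition $f \to q_0$ consumes the letter at position $i_k+1$; if $i_{k+1}-n \le i_k+1$ (for instance $(1\$)^\omega$ with $n=1$, where the critical $1$ of the next occurrence sits exactly at position $i_k+1$), the run is no longer in $q_0$ in time to take the branch $q_0 \xrightarrow{1} q_1$, so ``iterating this for all $k$'' does not define a run. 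The fix is to track only a subsequence of the occurrences --- the paper does precisely this by decomposing the word as $w_1w_1'w_2w_2'\ldots$ with $w_i'\in\lang_n$ and tracking only the odd-indexed blocks while idling in $q_0$ through the even-indexed ones; since distinct occurrences of $1\{0,1\}^{n-1}\$$ cannot overlap, every other occurrence is always reachable from $q_0$ in time.

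Two smaller points. First, your restatement of $\lang_n^\omega$ as ``infinitely many $i$ with $\alpha[i]=\$$ and $\alpha[i-n]=1$'' omits the requirement that the intervening letters lie in $\{0,1\}$ (compare $(1\$\$)^\omega$ for $n=2$, which satisfies your condition but is not in $\lang_2^\omega$); your chain traversal through $q_1,\ldots,q_n$ silently relies on this. Second, in the converse direction the correct conclusion is that the \emph{factor} $\alpha[i-n]\cdots\alpha[i]$ lies in $\lang_n$, not the whole prefix $\alpha[0\,..\,i]$, which may contain earlier $\$$s. The converse inclusion itself is sound and matches the paper's argument.
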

Define the NBA $\G_n = (\Sigma, Q, q_0, \delta, \{f\})$ where $Q = \{q_0, \ldots, q_n, f\}$ and  $\delta$ is defined as follows:
$\delta (q_0, 0) = \{q_0\}$; 
$\delta (q_0, 1) = \{q_0, q_1\}$; 
$\delta (q_i, \sigma) = \{q_{i+1}\}$ for all $i \in \{1, \ldots, n-1\}$ and $\sigma \in \{0, 1\}$;
$\delta (q_n, \$) = \{f\}$;
$\delta (q_n, \sigma) = \{q_0\}$ for $\sigma \in \{0, 1\}$; 
$\delta (f, 0) = \{q_0\}$; 
$\delta (f, 1) = \{q_0\}$; 
$\delta (q, \$) = \{q_0\}$ for all $q_n \neq q \in Q$.

$\G_n$ is obtained from $\A_n$ by 
returning to $q_0$ whenever the resulting automaton would block.
The intuition why $\G_n$ is GfM is that, if there is any sequence in $\lang_n$ in an BSCC\footnote{A strongly connected component (SCC) is a maximally strongly connected set of states. 
A bottom SCC (BSCC) is an SCC from which no state outside it is reachable.} of a Markov chain, then it can almost surely try infinitely often to facilitate it for acceptance.
Each of these attempts will succeed with a positive probability $p>0$, so that almost surely infinitely many of these infinite attempts will succeed.

\begin{lemma}\label{lem:gn_correct}
$\G_n$ recognises $\lang_n^\omega$.
\end{lemma}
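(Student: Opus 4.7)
The plan is to prove both inclusions $\lang(\G_n) \subseteq \lang_n^\omega$ and $\lang_n^\omega \subseteq \lang(\G_n)$ by direct inspection of the transition structure of $\G_n$.

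For $\lang(\G_n) \subseteq \lang_n^\omega$, I would trace acceptance backwards from $f$. Since $f$ is the unique accepting state, every accepting run visits $f$ infinitely often. Inspecting $\delta$ shows that the only transition into $f$ is $\delta(q_n,\$) = \{f\}$, the only predecessor of $q_n$ is $q_{n-1}$ on letters in $\{0,1\}$, and so on down to $q_1$, whose sole predecessor is $q_0$ via the letter $1$. Consequently, whenever a run reaches $f$ after reading the first $t+n+1$ letters of the input, the window $w_t w_{t+1}\cdots w_{t+n}$ is forced to have shape $1\{0,1\}^{n-1}\$$, hence to lie in $\lang_n$. Infinitely many visits to $f$ thus yield infinitely many $\lang_n$-factors in $w$, so $w \in \lang_n^\omega$.

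For $\lang_n^\omega \subseteq \lang(\G_n)$, I would construct an accepting run explicitly by scheduling nondeterministic guesses at well-spaced pattern positions. Given $w \in \lang_n^\omega$, infinitely many positions $p$ satisfy $w_p w_{p+1} \cdots w_{p+n} \in \lang_n$, so a greedy choice yields an infinite sequence $p_1 < p_2 < \cdots$ of such positions with $p_{j+1} \ge p_j + n + 2$. The run I build sits at $q_0$ via the self-loops $q_0 \in \delta(q_0,\sigma)$ for every $\sigma \in \{0,1,\$\}$, and at each position $p_j$ resolves the nondeterministic branch $\delta(q_0,1) = \{q_0,q_1\}$ in favour of $q_1$; it is then forced along $q_1 \to q_2 \to \cdots \to q_n \to f$, arriving at $f$ after consuming $w_{p_j+n} = \$$, and is returned to $q_0$ one step later via $\delta(f,\sigma) = \{q_0\}$, in time to await $p_{j+1}$. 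Each $p_j$ contributes a distinct visit to $f$, so the run is accepting.

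I do not expect any serious obstacle. The only minor technicality is the off-by-one bookkeeping that forces the spacing $p_{j+1} \ge p_j + n + 2$: because the transition out of $f$ consumes one extra letter before returning to $q_0$, any two selected pattern windows must be separated by at least one buffer symbol. Such a subsequence always exists since each chosen position rules out only finitely many candidates, while the set of pattern positions is infinite. Overall, soundness reduces to reading predecessors backwards from $f$, and completeness reduces to committing nondeterministically to non-overlapping $\lang_n$-occurrences—exactly the behaviour $\G_n$ was engineered to support.
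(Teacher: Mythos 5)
Your proof is correct and follows essentially the same route as the paper: soundness by tracing the unique predecessor chain backwards from $f$ to force a window of shape $1\{0,1\}^{n-1}\$$ at each visit, and completeness by building a run that idles on the $q_0$ self-loops and commits the nondeterministic branch at infinitely many non-overlapping pattern occurrences. The paper achieves the non-overlap by committing only at every other occurrence in a fixed factorisation, whereas you use a greedy selection with spacing $p_{j+1}\ge p_j+n+2$; this is only a cosmetic difference.
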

\begin{proof}
\textbf{`$\lang(\G_n) \subseteq \lang_n^\omega$':} 
It is easy to see that by the construction of $\G_n$, to visit $f$ infinitely often, $\G_n$ has to take the path $q_0, 1, q_1, \sigma_1, q_2, \ldots, q_{n}, \$, f$ on the word $1\sigma_1\cdots\sigma_{n-1}\$ \in \lang_n$ where $\sigma_i \in \{0,1\}$ for all $i \in \{1, \ldots, n-1\}$ infinitely often.

\textbf{`$\lang_n^\omega \subseteq \lang(\G_n)$':} 
We only need to show there is an accepting run of $\G_n$ on any word in $\lang_n^\omega$.
A word $w \in \lang_n^\omega$ is of the form $w_1w_1'w_2w_2'w_3w_3'\ldots$ such that, for all $i\in \nat$, $w_i \in \{0, 1, \$\}^*$ and $w_i' \in \lang_n$.
Let $w_i' = u_i1v_i$ where $u_i \in \{0, 1\}^*$ and $v_i \in \{0, 1\}^{n-1}\$$ for all $i > 0$. We can build the run as follows.

For odd indices $i$, $\G_n$ always stays in $q_0$ on $w_iu_i$, transitions to $q_1$ on $1$ and to $f$ on $v_i$;
(at the beginning, $\G_n$ starts from $q_0$ and stays in $q_0$ on $w_1u_1$, transitions to $q_1$ on $1$ and to $f$ on $v_1$.)
For even indices $i$, $\G_n$ starts from the state $f$, then traverses to $q_0$ and stays in $q_0$ on $w_iw_i'$. 


For every $w_i' \in \lang_n$ such that $i$ is odd, the run we build will visit the accepting state $f$. 
As there are infinitely many such $w_i'$s, the run will visit the accepting state $f$ infinitely often, and thus, is accepting.  
\end{proof}

\begin{lemma}\label{lem:gfm}
$\G_n$ is good for MDPs.
\end{lemma}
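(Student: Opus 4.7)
The plan is to establish $\mathrm{PSyn}_{\G_n}^{\M}(s_0) \geq \mathrm{PSem}_{\G_n}^{\M}(s_0)$ for every finite MDP $\M$, since the reverse inequality is automatic. The approach is to use the standard MDP fact that $\mathrm{PSyn}$ on a product equals the probability of reaching an accepting end-component, and to exhibit such an end-component corresponding to each good BSCC of a near-optimal induced Markov chain.

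First I would fix $\varepsilon > 0$ and choose a finite-memory strategy $\mu$ on $\M$ satisfying $\mathrm{PSem}_{\G_n}^{\M}(s_0, \mu) \geq \mathrm{PSem}_{\G_n}^{\M}(s_0) - \varepsilon$; such $\mu$ exists because $\lang_n^\omega$ is recognised by a deterministic B\"uchi automaton and B\"uchi objectives on finite MDPs admit optimal finite-memory strategies. I would then partition the BSCCs of the induced Markov chain $\M_\mu$ into \emph{good} ones, which generate words in $\lang_n^\omega$ almost surely, and \emph{bad} ones; since $\lang_n^\omega$ is a tail event, this dichotomy is exhaustive and $\mathrm{PSem}_{\G_n}^{\M}(s_0, \mu)$ equals the probability of reaching a good BSCC.

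The central step is to show that for each good BSCC $B$ of $\M_\mu$, with projection $B_S \subseteq S$, the set $C_B = B_S \times \{q_0, q_1, \ldots, q_n, f\}$ supports an accepting end-component of $\M \times \G_n$. Closure of $C_B$ follows immediately from closure of $B_S$ under $\mu$'s actions. For strong connectivity, I would use the goodness of $B$ to obtain, from every $s \in B_S$, a positive-probability occurrence of the factor $1\{0,1\}^{n-1}\$$ within a bounded window, which then witnesses a positive-probability product path $(s, q_0) \to (s_1, q_1) \to \cdots \to (s_n, q_n) \to (s_{n+1}, f)$ by resolving the nondeterminism at $q_0$ towards $q_1$; the deterministic transition $(s, f) \to (s', q_0)$ on any letter then closes the cycle back into $B_S \times \{q_0\}$. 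Since $C_B$ contains $B_S \times \{f\}$, it is accepting, and it is reachable from $(s_0, q_0)$ with probability $\mathrm{Pr}[\text{reach } B]$ by mimicking $\mu$ while forcing $q_0 \to q_0$. Summing over good BSCCs and invoking the standard theorem yields $\mathrm{PSyn}_{\G_n}^{\M}(s_0) \geq \mathrm{PSem}_{\G_n}^{\M}(s_0) - \varepsilon$; letting $\varepsilon \to 0$ concludes.

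The main obstacle I expect is verifying strong connectivity of $C_B$ rigorously: given that the $n+1$ letters of a factor must appear consecutively, and that not every state of $B_S$ emits every letter with positive probability, one must carefully compose the positive-probability factor windows into positive-probability product paths between arbitrary pairs of states in $C_B$. A secondary subtlety is handling the finite memory of $\mu$; I would sidestep it by first lifting to the product $\M \times \D$ for a DBA $\D$ recognising $\lang_n^\omega$ and working with a memoryless optimal strategy there, where both BSCCs and end-components are directly visible.
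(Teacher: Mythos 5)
Your overall route---reduce $\mathrm{PSem}$ to the probability of reaching a good BSCC of an induced chain (equivalently, an accepting end-component of a product with a DBA), then match each good BSCC with an accepting end-component of $\M\times\G_n$ and invoke the end-component characterisation of $\mathrm{PSyn}$---is viable and shares its first half with the paper, which also passes through a deterministic automaton $\D_n$. The step that fails is the claim that $C_B=B_S\times\{q_0,\dots,q_n,f\}$ is strongly connected. Your argument only shows that from each $(s,q_0)$ there is a positive-probability cycle through some $f$-state back into $B_S\times\{q_0\}$; it does not show that every state of $C_B$ is reachable from every other, and in general it is not: $(s,q_1)$ is reachable only if $s$ has an incoming $1$-labelled transition inside $B$, and $(s,f)$ only if $s$ has an incoming $\$$-labelled transition taken from an occupiable copy of $q_n$. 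Concretely, for $n=1$ take the good BSCC with $a\to a$ on $0$ (prob.\ $\frac12$), $a\to b$ on $1$ (prob.\ $\frac12$), $b\to c$ on $\$$, $c\to a$ on $0$: the states $(a,q_1)$, $(c,q_1)$, $(a,f)$, $(b,f)$ are unreachable, so $B_S\times Q$ is not an end-component. The repair is routine but must be made: take instead the set of states reachable from $B_S\times\{q_0\}$ inside $B_S\times Q$, with all enabled choices. This is closed; it is strongly connected because $q_0\in\delta(q_0,\sigma)$ for every $\sigma$ (so $B_S\times\{q_0\}$ is mutually reachable by mimicking $B$) and every automaton state deterministically falls back to $q_0$ within $n{+}2$ steps; and it contains an $f$-state by your positive-probability tracking argument, which is where the goodness of $B$ is used.

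A second point you should not gloss over: the paper's product uses actions $(\m,q')$ that commit to the automaton successor \emph{before} the letter is sampled, discarding the probability mass of mismatching letters. Read literally, the pair $((s,q_0),(\m,q_1))$ then loses all mass on letters other than $1$, so it cannot lie in any end-component (the lost mass is an outgoing transition to an implicit rejecting sink), and no accepting end-component would exist already for a one-state chain emitting $0,1,\$$ uniformly. Your argument---like the paper's own tracking strategy---needs the automaton's nondeterminism to be resolved \emph{after} the letter is read, and since the end-component characterisation of $\mathrm{PSyn}$ is exactly where this bites, you should state that reading explicitly. For comparison, the paper avoids analysing end-components of $\M\times\G_n$ altogether: in each accepting end-component of $\M\times\D_n$ it fixes one word $w\in\lang(\A_n)$ producible from a fixed state $m$, and argues that tracking $w$ from $q_0$ is attempted almost surely infinitely often and succeeds each time with a fixed probability $p>0$, hence succeeds infinitely often almost surely. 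That argument needs no strong-connectivity analysis and is the lighter of the two once the DBA reduction is in place; your version, once repaired as above, is a legitimate and arguably more standard alternative.
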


\begin{proof}

Consider an arbitrary MC $\M$ with initial state $s_0$. We show that $\G_n$ is good for $\M$, that is, $\mathrm{PSem}_{\G_n}^{\M}(s_0) = \mathrm{PSyn}_{\G_n}^{\M}(s_0)$. It suffices to show $\mathrm{PSyn}_{\G_n}^{\M}(s_0) \ge \mathrm{PSem}_{\G_n}^{\M}(s_0)$ since by definition the converse $\mathrm{PSem}_{\G_n}^{\M}(s_0) \ge \mathrm{PSyn}_{\G_n}^{\M}(s_0)$ always holds.

First, we construct a language equivalent deterministic B\"uchi automaton (DBA) $\D_n$. 
We start with the NFA $\A_n'$, see \cref{fig:NFA2}, which recognises the language $\{w \in \{0, 1\}^{*} \mid \text { the $n^{\mbox{\scriptsize th}}$ to the last letter in $w$ is $1$}\}$. 
The NFA $\A_n'$ is very similar to $\A_n$: it has the same states $q_i$ for $i \in \{0, \ldots, n\}$ and the same transitions between them; 
the only difference is that $q_n$ is accepting without any outgoing transitions as there is no $f$ state in $\A_n'$. 
We then determinise $\A_n'$ to a DFA $\B_n'$ by a standard subset construction and then obtain $\B_n$ from $\B_n'$ by adding a fresh $f$ state, adding a $\$$ transition from all accepting states to $f$ and making $f$ the only accepting state. 
It is easy to see that $\B_n$ is deterministic and $\lang(\A_n) = \lang(\B_n)$.  
Now we obtain the DBA $\D_n$ from the DFA $\B_n$ in the same way as we obtain the NBA $\G_n$ from the NFA $\A_n$.

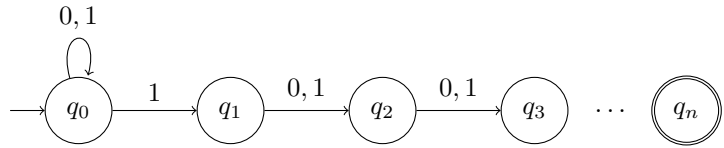
\begin{figure}[ht]
\centering
\begin{tikzpicture} [->, node distance = 2cm, auto,initial text = {}]
    \node[initial left,state] (q0) at (0,1) {$q_0$};
    \node[state] (q1) at (2,1) {$q_1$};
    \node[state] (q2) at (4,1) {$q_2$};
    \node[state] (q3) at (6,1) {$q_3$};
    \node (qi) at (7,1) {$\ldots$};
    \node[state, accepting] (qn) at (8,1) {$q_n$};

    \path (q0) edge [loop above] node {$0,1$} (q0);
    \path (q0) edge node {$1$} (q1);
    \path (q1) edge node {$0,1$} (q2);
    \path (q2) edge node {$0,1$} (q3);
\end{tikzpicture}
\caption{The NFA $\A_n'$, which recognises the language $\{w \in \{0, 1\}^{*} \mid \text { the $n^{\mbox{\scriptsize th}}$ to the last letter in $w$ is $1$}\}$.}
\label{fig:NFA2}
\end{figure}

Since $\lang(\G_n) = \lang(\D_n)$, we have that $\mathrm{PSem}_{\G_n}^{\M}(s_0) = \mathrm{PSem}_{\D_n}^{\M}(s_0)$. In addition, since $\D_n$ is deterministic, we have $\mathrm{PSem}_{\D_n}^{\M}(s_0) = \mathrm{PSyn}_{\D_n}^{\M}(s_0)$.
  
It remains to show $\mathrm{PSyn}_{\G_n}^{\M}(s_0) \ge \mathrm{PSyn}_{\D_n}^{\M}(s_0)$. For that, it suffices to show that for an arbitrary accepting run $r$ of $\M \times \D_n$, there is a deterministic strategy for $\M \times \G_n$ such that $r'$ (the corresponding run in the product) is accepting in $\M \times \G_n$ where the projections of $r$ and $r'$ on $\M$ are the same.

Consider an accepting run of $\M \times \D_n$. Before entering an accepting end-component of $\M \times \D_n$, $\G_n$ always stays in $q_0$.
Once an accepting end-component of $\M \times \D_n$ is entered,
since its projection on $\D_n$ is an SCC in $\D_n$,
there must exist a word $w \in \lang(\B_n)$ (and thus $w \in \lang(\A_n)$), which is a possible initial sequence of letters produced from some state $m$ of $\M \times \D_n$ in this end-component.
We fix such a word $w$; such a state $m$ of the end-component in $\M \times \D_n$ from which this word $w$ can be produced; and strategy of the NBW $\G_n$ to follow the word $w$ from $q_0$ to the accepting state $f$.

In an accepting end-component we can be in two modes: \emph{tracking}, or \emph{not tracking}.
If we are \emph{not tracking} and reach $m$, we start to \emph{track} $w$:
If we are not in $q_0$, we move to \emph{not tracking}.
Otherwise, we use $\G_n$ to reach an accepting state after reading $w$ (ignoring what happens in any other case) with the strategy we have fixed. 
Note that, when starting from $m$, it is always possible, with a fixed probability $p>0$, to read $w$ from $q_0$, and thus to accept.
If we read an unexpected letter (where the `expected' letter is always the next letter from $w$) or the end of the word $w$ is reached, we move to \emph{not tracking}.

In the \emph{not tracking} mode, the strategy for $\G_n$ is to always stay in $q_0$ when it is in $q_0$. 
Thus, $\G_n$ can always re-set to the $q_0$ state. 
We do not need to specify actions for the other states in $\G_n$ as they have only deterministic transitions.

Once in an accepting end-component of $\M \times \D_n$, tracking is almost surely started infinitely often, and it is thus almost surely successful infinitely often.
Thus, we have
$\mathrm{PSyn}_{\G_n}^{\M}(s_0)\geq \mathrm{PSyn}_{\D_n}^{\M}(s_0)$.
\end{proof}

\subsection{The Marking Algorithm}

We show that the DBA which recognises $\lang_n^{\omega}$ is at least $\frac{1}{n+2}$ times the size of the minimum DFA which recognises $\lang_n$.
To do that, we first present an algorithm which marks the states of the DBA with finite words from $\Sigma^{+}$. 

We assume that the input DBA is finite, complete and all states are reachable from the initial state. 
It is a reasonable assumption, as we can see from \cref{lem:dba-complete}, a DBA which recognises $\lang_n^{\omega}$ with all states reachable from the initial state is complete. 
All states of $\D_n$ are complete because $\lang_n^{\omega}$ is a tail objective, an objective whose occurrence is independent of any finite prefix.

\begin{lemma}\label{lem:dba-complete}
 Let $\D_n$ be a DBA that recognises $\lang_n^{\omega}$. Let all states in $\D_n$ are reachable from the initial state. The DBA $\D_n$ is complete.
\end{lemma}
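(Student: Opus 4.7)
The plan is to exploit the key property that $\lang_n^\omega$ is a \emph{tail language}, meaning that for every finite word $u \in \Sigma^*$ and every $\alpha \in \Sigma^\omega$, we have $u\alpha \in \lang_n^\omega$ if and only if $\alpha \in \lang_n^\omega$. This is immediate from the definition: membership in $\lang_n^\omega$ is characterised by the presence of infinitely many occurrences of factors in $\lang_n$, a property that neither depends on any finite prefix nor is affected by prepending one.

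From this I would argue as follows. Take any reachable state $q$ and let $u$ be a finite word with $\delta(q_0, u) = q$ (well-defined since $\D_n$ is deterministic). Writing $\lang(\D_n)_q$ for the language of $\D_n$ started at $q$, determinism gives
\[
\lang(\D_n)_q = \{\alpha \in \Sigma^\omega \mid u\alpha \in \lang(\D_n)\} = \{\alpha \in \Sigma^\omega \mid u\alpha \in \lang_n^\omega\} = \lang_n^\omega,
\]
where the last equality uses the tail property. Hence every reachable state has language exactly $\lang_n^\omega$.

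It remains to show that this forces $\delta(q, \sigma)$ to be defined for every $\sigma \in \Sigma$. For this I would observe that every letter in $\Sigma = \{0, 1, \$\}$ occurs as the first letter of some word in $\lang_n^\omega$: fix any $w \in \lang_n^\omega$ (which is nonempty, e.g.\ $(1\,0^{n-1}\$)^\omega \in \lang_n^\omega$), and then $\sigma w \in \lang_n^\omega$ for every $\sigma \in \Sigma$ by the tail property. Thus, if $\delta(q, \sigma)$ were undefined for some reachable $q$ and some $\sigma$, the word $\sigma w$ would belong to $\lang(\D_n)_q = \lang_n^\omega$ yet $\D_n$ started from $q$ would have no run on $\sigma w$ at all, let alone an accepting one, a contradiction.

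The argument is essentially a single observation (the tail property) followed by two routine deductions, so I do not foresee a significant obstacle; the only point to be slightly careful about is to invoke determinism when identifying the language from $q$ with $\lang_n^\omega$, and to pick a concrete witness word showing $\lang_n^\omega$ is nonempty and that prepending any single letter stays inside $\lang_n^\omega$.
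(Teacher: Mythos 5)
Your proof is correct and follows essentially the same route as the paper's: both arguments rest on the tail property of $\lang_n^\omega$ to conclude that a missing $\sigma$-transition at a reachable state would cause some word of $\lang_n^\omega$ (reached via a prefix leading to that state and continuing with $\sigma$) to be rejected. Your version is slightly more explicit in identifying the residual language of each reachable state and in exhibiting a concrete witness such as $(1\,0^{n-1}\$)^\omega$, but the underlying argument is the same.
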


\begin{proof}
Assume for contradiction that the state $q$ in $\D_n$ is not complete and is missing a $\sigma$-transition where $\sigma \in \Sigma = \{0, 1, \$\}$. 
Since all states are reachable, there is a finite word $w$ that takes the initial state to $q$. 
All infinite words that start with $w\sigma$ will be rejected by $\D_n$.
For any infinite word $w' \in \lang_n^{\omega}$ that begins with $\sigma$, the infinite word $ww'$ should also be in $\lang_n^{\omega}$. 
However, $ww'$ is rejected by $\D_n$; hence, the contradiction. 
\end{proof}

For an arbitrary $n \in \nat$, define the language 
$$\Gamma_n = \{w \in \{0, 1\}^{*}\$ \mid \text { the $n^{\mbox{\scriptsize th}}$ letter before $\$$ in $w$ is $0$}\}\ .$$
We have $\Gamma_n$ is a subset of the complement language of $\lang_n$, that is, $\Gamma_n \subseteq \Sigma^* \setminus \lang_n$.
Same as $\lang_n$, all words in $\Gamma_n$ are at least of length $n+1$.

Here comes the marking algorithm for a DBA $\D_n$ which recognises $\lang_n^{\omega}$:

\begin{algorithm}[ht]
\SetAlgoLined
\DontPrintSemicolon
\KwIn{A complete DBA $\D_n$ which recognises $\lang_n^{\omega}$ and an empty marking function $V: Q^\D \pfun \Sigma^+$}
\KwOut{A (partial) marking function $V: Q^\D \pfun \Sigma^+$}
\setstretch{1.35}
Mark all final states with $0$, that is, $V(q) = 0$ for all $q \in Q^\D_f$;\;
For every unmarked state $q$ from which a final state can be reached through a word $w \in \{0,1\}^+$, fix such a word $w_q$ and mark it with this word, that is, $V(q) = w_q$; \;
For every unmarked state $q$ from which a 
marked
state can be reached through a word $w \in \Gamma_n$, 
fix such a word $w_q$ and mark it with this word, that is, $V(q) = w_q$. \;
\caption{\mbox{Marking Algorithm for DBA}}
\label{alg:marking-DBA}
\end{algorithm}

Given an empty marking function and a finite, complete DBA in which all states are reachable from the initial state, the marking algorithm proceeds in three simple steps.
It first marks all final states with the letter $0$ on line~1. 
Next, it marks all states that can reach the final states by going through a finite word of only $0$s and $1$s with that word on line~2. 
Finally, it marks one by one the state that can reach any already marked state via a word in $\Gamma_n$ with that word.
As there are only finitely many states in the input DBA, the marking algorithm always terminates. 

\subsection{GfM More Succinct Than GfG}
In this section, we show that GfM NBA is exponentially more succinct than GfG NBA. 
We show that, in the marked DBA output by \cref{alg:marking-DBA}, there must be some states that are unmarked.
We produce a new DFA, $\mathcal P$, by collapsing all marked states into a single accepting sink without any outgoing transitions. 
In $\mathcal P$, the unmarked states and the transitions between them are the same as the original marked DBA. 
We then show that $\mathcal P$ is at least $\frac{1}{n+2}$ times the number of states as the DFA which recognises $\lang_n$.  

Let the input DBA be $\D_n = (\Sigma, Q^\D,q_0^\D, \delta^{\D}, Q_f^{\D})$. 
When the algorithm terminates, a state $q$ is unmarked if $V(q)$ is undefined. 
We first show an important property of the algorithm:


\begin{lemma}\label{lem:unmarked-state-trans}
After line~2 and after 
line~3,
we have for every unmarked state $q$ that $\delta^{\D}(q,0)$ and $\delta^{\D}(q,1)$ are unmarked.  
\end{lemma}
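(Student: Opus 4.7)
My approach is to extract one simple closure property and use it for both halves of the statement. The property is: each of the two ``witness'' languages used by the algorithm---namely $\{0,1\}^{+}$ in line~2 and $\Gamma_n$ in line~3---is closed under prepending an arbitrary letter from $\{0,1\}$. For $\{0,1\}^{+}$ this is immediate, and for $\Gamma_n$ it holds because $\Gamma_n$-membership depends only on the final letter (which must be $\$$) and on the $n$-th letter before $\$$ (which must be $0$), neither of which is changed when we prepend a single letter from $\{0,1\}$ at the front.

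For the line-2 case I would argue by contradiction: suppose $q$ is unmarked after line~2 but $p := \delta^{\D}(q,\sigma)$ is marked for some $\sigma \in \{0,1\}$. Then $p$ was marked either by line~1 (so $p$ itself is final) or by line~2 (so there is a word $w \in \{0,1\}^{+}$ leading from $p$ to a final state). In the first case $q$ reaches a final state via $\sigma \in \{0,1\}^{+}$, in the second case via $\sigma w \in \{0,1\}^{+}$. Either way $q$ satisfies the condition of line~2, so it would have been marked, a contradiction.

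For the line-3 case I would first invoke the line-2 case: since $q$ unmarked after line~3 is in particular unmarked after line~2, the already-established part gives that $\delta^{\D}(q,0)$ and $\delta^{\D}(q,1)$ are unmarked after line~2. Assume now, for contradiction, that one of them, say $p := \delta^{\D}(q,\sigma)$, becomes marked during line~3. At the moment of this marking, some already-marked state $p'$ is reachable from $p$ via a word $w \in \Gamma_n$. By the closure property, $\sigma w \in \Gamma_n$ witnesses reachability from $q$ to $p'$, and since marked states stay marked, $p'$ remains marked. Hence line~3 must also mark $q$, contradicting the assumption.

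I do not anticipate a genuine obstacle. The only delicate point is that line~3 is iterative and order-dependent, so that the target state $p'$ might itself have been marked in an earlier sub-step of line~3 rather than by lines~1--2. I would handle this by emphasising the monotonicity of the marking process---states only ever transition from unmarked to marked, never back---so the witness $p'$ that justified marking $p$ remains available as a witness to justify marking $q$, which is exactly what makes the contradiction go through.
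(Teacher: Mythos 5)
Your proposal is correct and follows essentially the same route as the paper's proof: argue by contradiction, case-split on whether the successor was marked at line~1/2 or during line~3, and prepend the letter $\sigma$ to the witness word, using that $\{0,1\}^{+}$ and $\Gamma_n$ are closed under prepending a letter from $\{0,1\}$. Your explicit treatment of the monotonicity of marking during the iterative line~3 is a point the paper leaves implicit, but it is the same argument.
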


\begin{proof}
It is easy to see that, for a state $q$ that is unmarked after executing line~2, all finite word $w \in \{0, 1\}^+$ will take $q$ to unmarked states only. 
Otherwise, assume for contradiction that $q$ reaches a marked state $q'$ on a finite word $w \in \{0, 1\}^+$, we have two cases:
\begin{itemize}
    \item If $q'$ is final, $q$ should be marked on line~2.
    \item If $q'$ is marked with $V(q')$ on line~2, then $q$ will reach a final state on $wV(q')$, and $q$ should also be marked on line~2.  
\end{itemize}

Now we prove for a state $q$ which remains unmarked after 
line~3,
both $\delta^{\D}(q,0)$ and $\delta^{\D}(q,1)$ are unmarked. 
Assume for contradiction $q' = \delta^{\D}(q,0)$ is marked. 
We distinguish the following two cases.
\begin{itemize}
    \item 
    If $q'$ is marked after executing line~2, a final state is reached through the finite word $V(q') \in \{0, 1\}^+$ from $q'$. 
    Thus, a final state is reached through the finite word $0V(q') \in \{0, 1\}^+$ from $q$, which is a contradiction as $q$ should be marked on line~2.    
    \item
     Otherwise, $q'$ is unmarked after executing line~2, it is then marked 
     on line~3.
     A marked state is reached through the finite word $V(q') \in \Gamma_n$ from $q'$.
     Thus, the same marked state is reached through the finite word the finite word $0V(q') \in \Gamma_n$ from $q$, then $q$ should be marked on line~3, which again is a contradiction. 
\end{itemize}

Similarly, we can show by contradiction that $\delta^{\D}(q,1)$ is also unmarked. 
\end{proof}

When the algorithm terminates, we have a partial marking function such that a state of the DBA is either marked with a finite word $w \in \Sigma^+$ or unmarked.
\cref{lem:unmarked-state-trans} tells us that in the marked DBA, on any word from $\{0, 1\}^+$ any unmarked states will only go through unmarked states on the way.
We show in the next lemma that there must be some unmarked states.

\begin{lemma}\label{lem:DBA-not-all-marked}
There exist states $q$ such that $V(q)$ is undefined.   
\end{lemma}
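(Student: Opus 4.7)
The plan is to argue by contradiction: assume every state of $\D_n$ receives a mark, and derive from this marking an infinite word that $\D_n$ accepts yet that lies outside $\lang_n^\omega=\lang(\D_n)$.

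I would first attach to each marked state $q$ a \emph{marking depth} $d(q)\in\nat$: set $d(q)=1$ if $q$ is final (line~1), $d(q)=2$ if $q$ is marked on line~2, and $d(q)=k+2$ if $q$ is first marked in the $k$-th iteration of line~3. The structural invariant one needs is immediate from the definition of the algorithm: if $d(q)=2$ then following $V(q)\in\{0,1\}^+$ in $\D_n$ deterministically ends in a final state (depth $1$); and if $d(q)\geq 3$ then following $V(q)\in\Gamma_n$ ends in a marked state of strictly smaller depth than $d(q)$, since line~3 only marks states whose target was already marked at a previous stage.

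Using this I would build an infinite run in $\D_n$ from $q_0$ by iteratively following marks: set $q_{i+1}=\delta^\D(q_i,V(q_i))$ and let $w=V(q_0)V(q_1)V(q_2)\cdots$. The depth descent guarantees that at most $d(q_i)-1$ successive line-3 steps can occur before the run enters a state of depth at most $2$, which is itself final or leads to a final state within the following line-2 mark. Once a final state has been visited, its $V$-mark is a single letter and the process restarts from the next marked state, to which the same bound applies. Hence final states are visited infinitely often and $w\in\lang(\D_n)$.

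To see $w\notin\lang_n^\omega$, I would analyse the $\$$-structure of $w$. The $V$-words coming from lines~1 and~2 are $\$$-free, while each line-3 $V$-word lies in $\Gamma_n$, has length at least $n+1$, and contains a single $\$$ at its end; consequently every $\$$ appearing in $w$ is preceded at distance exactly $n$ by the letter $0$. If only finitely many line-3 steps occur then $w$ contains only finitely many $\$$'s, which already precludes membership in $\lang_n^\omega$ (whose elements have infinitely many $\$$'s). Otherwise, in any candidate decomposition $w=w_1 w_2\cdots$ with $w_j\in\lang_n$, each factor boundary $\$$ would require a $1$ at position $n$ before it, contradicting the uniform $0$'s we just identified. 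Either way $w\in\lang(\D_n)\setminus\lang_n^\omega$, the desired contradiction.

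The main obstacle I anticipate is synchronising the two halves of the contradiction: acceptance of $w$ is secured entirely by the line-1 and line-2 marks through the depth descent, whereas non-membership in $\lang_n^\omega$ is enforced entirely by the $\Gamma_n$-structure of the line-3 marks. The depth-descent argument is precisely what lets both features coexist along one and the same run, and a clean write-up will turn on making that coexistence explicit.
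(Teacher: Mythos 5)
Your proposal is correct and follows essentially the same route as the paper: assume all states are marked, chain the marks $V(q_0^\D)V(q_1^\D)\cdots$ into an infinite word, observe that every $\$$ in it comes from a $\Gamma_n$-mark (so no factor lies in $\lang_n$), and derive acceptance from the well-foundedness of the line-3 marking order. The only cosmetic difference is that you package that well-foundedness as an explicit depth/ranking function, whereas the paper derives the same contradiction from a repeated state in an eventually-all-line-3 segment of the sequence.
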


\begin{proof}
Assume for contradiction that all states are marked, that is, $V(q) \in \Sigma^+$ for all $q \in Q^\D$. 
We successively build an infinite word which consists of a sequence of finite words. 
While we are building this infinite word, we also build the run of $\D_n$ on this word.
We start with the finite word $V(q_0^\D)$ where $q_0^\D$ is the initial state.
The selection of subsequent finite words is based on the end state that is reached in the run
sequence that $\D_n$ produces when traversing the input word $w$: 
for state $q$, we use $V(q)$ where  $q = \delta^\D (q_0^\D, w)$ and $w$ is the word we have built before adding $V(q)$.

It is not hard to see that the infinite word we have built does not belong to $\lang_n^\omega$ as it does not contain any finite word in $\lang_n$.
We show that the run on this word, however, is accepting in $\D_n$, which contradicts that $\D_n$ recognises $\lang_n^\omega$.

Let $w = V(q_0^\D)V(q_1^\D)V(q_2^\D)\ldots$ be the infinite word we have built. 
Let the prefix be $w_i = V(q_0^\D)\ldots V(q_i^\D)$ for $i \ge 0$ and $q_i^\D = \delta(q_0^\D, w_{i-1})$ for $i > 0$. 
The run of $\D_n$ on $w$ will visit all states $q_i^{\D}$ for $i \ge 0$. 
If there are infinitely many indices $i$ such that $q_i^{\D}$ is final, the run is accepting.
Otherwise, there are only finitely many indices $i$ such that $q_i^{\D}$ is final. We distinguish the following two cases:
\begin{itemize}
\item 
Assume that we have infinitely many indices $i$ such that $q_i^{\D}$ is marked on line~2 of \cref{alg:marking-DBA}. 
We have that $q_{i+1}^{\D}$ is final by running from $q_i^{\D}$ on the finite word $V(q_i^{\D})$,  which contradicts the assumption that we only have finitely many indices $i$ such that $q_i^{\D}$ is final.
\item 
Assume that starting from some positive index $i$, we only have states $q_i^{\D}$ that are marked on line~3 of \cref{alg:marking-DBA}. 
Since the input DBA is finite, there must exist indices $j$ and $k$ such that $i < j < k$ and $q_j^{\D} = q_k^{\D}$.
For each index $x$ such that $j \le x < k$, we have $q_{x+1}^{\D} = \delta\big(q_x^{\D}, V(q_x^{\D})\big)$ and $q_{x+1}^{\D}$ must be marked before $q_x^{\D}$ when running line~3 of \cref{alg:marking-DBA}.
This implies $q_{k}^{\D}$ is marked before $q_{j}^{\D}$, which leads to the desired contradiction as they are the same state.%
\vspace{-1.3em}
\end{itemize}
\end{proof}

After \cref{alg:marking-DBA} terminates, there must be some states that are marked, as all final states in $Q^{\D}_f$ are marked on line~1. 
The set of final states $Q^{\D}_f$ could not be empty, otherwise $\lang(\D_n) = \emptyset \neq \lang_n^\omega$.
There also must be some unmarked states by \cref{lem:DBA-not-all-marked}.
We produce a new automaton, $\mathcal P$, by collapsing all marked states into a single accepting sink without any outgoing transitions. 
We treat $\mathcal P$ as a DFA.

We show two important properties of $\mathcal P$.
First, we show a property that holds for all states in $\mathcal P$, that is, for all states $q$ in $\mathcal P$, $\mathcal P_q$ rejects all words in $\Gamma_n$.  

\begin{lemma}\label{lem:P-all}
For all states $q$ in the automaton $\mathcal P$ and all words $w$ in $\Gamma_n$, $\mathcal P_q$ rejects $w$.
\end{lemma}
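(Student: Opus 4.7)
The plan is to fix an arbitrary state $q$ of $\mathcal P$ and an arbitrary word $w \in \Gamma_n$, and argue by cases on whether $q$ is the collapsed accepting sink or an unmarked state of $\D_n$. Every word in $\Gamma_n$ has the shape $u\$$ where $u \in \{0,1\}^*0\{0,1\}^{n-1}$, so in particular $w$ is nonempty, begins with a $\{0,1\}$-prefix of length at least $n$, and ends with the single letter $\$$. This splitting is what will let me use \cref{lem:unmarked-state-trans} on all but the last letter.

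If $q$ is the accepting sink of $\mathcal P$, then by construction it has no outgoing transitions, so the run of $\mathcal P_q$ on $w$ cannot consume even the first letter, and $w$ is rejected. The interesting case is therefore when $q$ is an unmarked state of $\D_n$. I would then iterate \cref{lem:unmarked-state-trans} along the $\{0,1\}$-prefix $u$ of $w$: since $q$ is unmarked and each letter of $u$ is in $\{0,1\}$, the successor states $\delta^{\D}(q, u_0\cdots u_i)$ remain unmarked for every prefix of $u$. Let $q' = \delta^{\D}(q, u)$, which is unmarked, and let $q'' = \delta^{\D}(q', \$)$, which exists by completeness of $\D_n$ (\cref{lem:dba-complete}).

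The key step is now to show $q''$ is unmarked. Suppose, for contradiction, that $q''$ is marked. Then $q$ is an unmarked state from which a marked state is reachable through the word $w \in \Gamma_n$, so line~3 of \cref{alg:marking-DBA} would have marked $q$ as well, contradicting the assumption that $q$ is unmarked when the algorithm terminates. Hence $q''$ is unmarked, which means that in $\mathcal P$ the state $q''$ is not the collapsed accepting sink. The run of $\mathcal P_q$ on $w$ therefore ends in the non-accepting state $q''$, so $w$ is rejected.

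The main obstacle, if any, is the asymmetry between \cref{lem:unmarked-state-trans} (which is only about $0$- and $1$-transitions) and the last letter of words in $\Gamma_n$ (which is $\$$); the argument above handles this by treating the $\{0,1\}$-prefix with the lemma and using line~3 of the marking algorithm precisely to handle the final $\$$. One also has to be a little careful that line~3 is read as a fixed-point process: after it finishes, there is genuinely no unmarked state from which any marked state can be reached via a word in $\Gamma_n$, which is exactly what makes the contradiction go through.
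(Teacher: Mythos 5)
Your proof is correct and rests on the same key observation as the paper's: after line~3 of \cref{alg:marking-DBA} has run to its fixed point, no unmarked state can reach a marked state via a word in $\Gamma_n$, so the run on $w$ cannot end in the collapsed accepting sink. The paper phrases this as a contradiction starting from the assumption that $\mathcal P_q$ accepts $w$ (and omits the trivial sink case and the explicit appeal to \cref{lem:unmarked-state-trans} along the $\{0,1\}$-prefix), but these are only presentational differences.
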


\begin{proof}
Let $q$ be a state in the automaton $\mathcal P$.
Let $w$ be a word in $\Gamma_n$.
Towards contradiction, we assume $\mathcal P_q$ accepts $w$.
The sink state of $\mathcal P$ is reached through $w$ from $q$. 
Then, in the marked DBA $\D_n$, a marked state $q'$ is reached through $w$ from the unmarked state $q$.
The $q'$ state could be final (marked on line~1), or non-final (marked on line~2 or 3).
We have:
\begin{itemize}
\item 
Assume $q'$ is final. 
The state $q'$ should be marked on line~1. 
Since a final state is reached through $w \in \Gamma_n$, $q$ should be marked on line~3, which contradicts that $q$ is unmarked. 
\item 
Assume $q'$ is marked on line~2 or 3.
Since a marked state is reached through $w \in \Gamma_n$, $q$ should be marked on line~3, which contradicts that $q$ is unmarked.%
\vspace{-1.3em}
\end{itemize}
\end{proof}

Next, we show a property which holds for some state in $\mathcal P$, that is, there is a state $q$ in the automaton $\mathcal P$ such that $\mathcal P_q$ accepts all words in $\lang_n$.

\begin{lemma}\label{lem:P-exist-q}
There is a state $q$ in $\mathcal P$ such that for all $w \in \lang_n$, $\mathcal P_q$ accepts $w$. 
\end{lemma}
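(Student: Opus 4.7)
My approach is a contradiction argument driven by the tail property of $\lang_n^\omega$, which ensures $\lang((\D_n)_q) = \lang_n^\omega$ for every reachable state $q$ of $\D_n$. Call an unmarked state $q$ \emph{good} if $\delta^\D(q,w)$ is marked for every $w \in \lang_n$; a first, preparatory observation is that goodness is exactly the property the lemma asks for, i.e.\ it is equivalent to ``$\mathcal P_q$ accepts every $w \in \lang_n$''. Indeed, for an unmarked $q$ and a word $w = u\,1\,\sigma_1\cdots\sigma_{n-1}\,\$ \in \lang_n$, repeated application of \cref{lem:unmarked-state-trans} along the $\{0,1\}^+$-prefix $u\,1\,\sigma_1\cdots\sigma_{n-1}$ forces the $\D_n$-run from $q$ to remain inside the unmarked region up to (but not beyond) the final $\$$-step, so in $\mathcal P_q$ the sink is not hit before the last letter, and the word is accepted precisely when $\delta^\D(q,w)$ is marked.

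Next, assuming for contradiction that no good state exists, every unmarked $q$ admits some witness $w_q \in \lang_n$ with $\delta^\D(q,w_q)$ also unmarked. Pick any unmarked reachable state $q_0$ (it exists by \cref{lem:DBA-not-all-marked} together with the standing assumption that all states of $\D_n$ are reachable) and inductively set $q_{i+1} = \delta^\D(q_i,w_{q_i})$. This yields an infinite sequence of unmarked reachable states and an infinite word $\alpha = w_{q_0}w_{q_1}w_{q_2}\cdots \in \lang_n^\omega$. By the preparatory observation applied inside each block $w_{q_i}$, together with the fact that the final $\$$-step of that block lands in the unmarked state $q_{i+1}$ by construction, the entire run of $\D_n$ on $\alpha$ from $q_0$ traverses only unmarked states.

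Since line~1 of \cref{alg:marking-DBA} marks every final state, unmarked states are non-final, so this run visits no final state and is therefore rejecting. On the other hand, the tail property of $\lang_n^\omega$ yields $\lang((\D_n)_{q_0}) = \lang_n^\omega$, and $\alpha \in \lang_n^\omega$, so $\alpha$ must be accepted from $q_0$ --- the desired contradiction. The one delicate point I expect in the write-up is the treatment of the $\$$-transitions: \cref{lem:unmarked-state-trans} only controls $0$- and $1$-transitions out of unmarked states, so it does not by itself guarantee that reading $\$$ keeps us unmarked; but the contradiction hypothesis tailors each $w_{q_i}$ precisely so that its final $\$$-step also lands in an unmarked state, which is the key pivot that makes the argument go through.
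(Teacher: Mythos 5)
Your proof is correct and follows essentially the same route as the paper's: assume every unmarked state $q$ has a witness $w_q\in\lang_n$ rejected by $\mathcal P_q$, use \cref{lem:unmarked-state-trans} to see that rejection means the run stays unmarked through the final $\$$-step, chain the witnesses into an infinite word of $\lang_n^\omega$ whose (unique) run avoids all final states, contradiction. The only cosmetic difference is that you start from an arbitrary reachable unmarked state and invoke the tail property of $\lang_n^\omega$, whereas the paper starts from the DBA's initial state and prepends a word leading into the unmarked region when that state happens to be marked; your explicit handling of the final $\$$-transition is a point the paper treats more tersely.
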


\begin{proof}
Towards contradiction, we assume that for all unmarked state $q$ in $\mathcal P$ (every state but the sink), there is a finite word $w_q \in \lang_n$ that is rejected by $\mathcal P_q$.
By \cref{lem:unmarked-state-trans} and the assumption $\mathcal P_q$ rejects $w_q$, we have that the automaton $\mathcal P_q$ not only traverses through unmarked states but also ends in an unmarked state.   
Using the words $w_q$ for all $q$ in $\mathcal P$, we build an infinite word. 
We show that this word is in $\lang_n$ but is rejected by $\D_n$. 
Similar to the proof of \cref{lem:DBA-not-all-marked}, we successively build this infinite word which consists of a sequence of finite words.

We start with $q_0^\D$. 
If $q_0^\D$ is marked in the DBA, we select $w_{q_0^\D}$ as the first finite word such that $\delta^{\D}(q_0^\D, w_{q_0^\D})$ is unmarked.
Such a word must exist, otherwise as $\D_n$ is complete, all states are marked, which contradicts \cref{lem:dba-complete}.
If $q_0^\D$ is not marked in the DBA and $w_{q_0^\D}$ is defined, we select $w_{q_0^\D}$ as the first finite word. 
The selection of subsequent finite words is based on the end state that is reached in the run sequence that $\D_n$ produces when traversing the input word $w$: 
for state $q$, we use $w_q$ where  $q = \delta^\D (q_0^\D, w)$ and $w$ is the word we have built before adding $w_q$.

The infinite word we have built belongs to $\lang_n^\omega$ as it contains infinite many finite words from $\lang_n$.
As we argued before, after reaching the first unmarked state, the run visits only unmarked states via the selected words from $\lang_n$. 
Thus, the run on the infinite word we built is rejecting in $\D_n$, as we visit no final states after we traverse the prefix $w_{q_0^\D}$, which contradicts that $\D_n$ recognises $\lang_n^\omega$.
\end{proof}

With \cref{lem:P-all} and \cref{lem:P-exist-q}, we have our main theorem which gives a lower bound of the size of $\mathcal P$.

\begin{theorem}\label{theorem:P-main}
The DFA $\mathcal P$ is at least $\frac{1}{n+2}$ times the size of the minimum DFA which recognises $\lang_n$.
\end{theorem}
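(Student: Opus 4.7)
The plan is to deduce the theorem from a product construction that turns $\mathcal P$ into a DFA for $\lang_n$ of size at most $(n+2)|\mathcal P|$; since the minimum DFA for $\lang_n$ cannot be larger than any DFA recognising it, this immediately gives $|\mathcal M_n|\leq (n+2)|\mathcal P|$ and hence the claim.

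Let $q^\ast$ be the state provided by \cref{lem:P-exist-q}. Combining \cref{lem:P-exist-q} and \cref{lem:P-all}, we have $\lang_n\subseteq\lang(\mathcal P_{q^\ast})$ and $\Gamma_n\cap\lang(\mathcal P_{q^\ast})=\emptyset$. The crucial observation is that $\lang_n$ and $\Gamma_n$ together partition the set $\{0,1\}^{\geq n}\$$ of all words in $\{0,1\}^*\$$ of length at least $n+1$. Consequently, intersecting $\lang(\mathcal P_{q^\ast})$ with the filter language $\{0,1\}^{\geq n}\$$ yields exactly
\[
\lang(\mathcal P_{q^\ast})\cap\{0,1\}^{\geq n}\$\;=\;\bigl(\lang(\mathcal P_{q^\ast})\cap\lang_n\bigr)\cup\bigl(\lang(\mathcal P_{q^\ast})\cap\Gamma_n\bigr)\;=\;\lang_n\cup\emptyset\;=\;\lang_n.
\]

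I will then design a small filter DFA $\mathcal C$ with $n+2$ states that recognises (a language whose intersection with $\lang(\mathcal P_{q^\ast})$ equals) $\{0,1\}^{\geq n}\$$: states $c_0,c_1,\dots,c_n$ count characters read from $\{0,1\}$ (saturating at $c_n$), and a single state $c_+$ is the accepting state entered from $c_n$ upon reading $\$$. Taking the product $\mathcal Q=\mathcal P_{q^\ast}\times\mathcal C$, a state is accepting iff both components are accepting, so $\lang(\mathcal Q)=\lang(\mathcal P_{q^\ast})\cap\lang(\mathcal C)=\lang_n$ by the identity above. The product has at most $(n+2)|\mathcal P|$ states, whence $|\mathcal M_n|\leq|\mathcal Q|\leq(n+2)|\mathcal P|$, and so $|\mathcal P|\geq|\mathcal M_n|/(n+2)$.

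The main obstacle is ensuring that $\mathcal C$ truly fits in $n+2$ states rather than the naive $n+3$ (a standard counter construction needs an extra rejecting sink to absorb premature $\$$-transitions out of $c_0,\dots,c_{n-1}$ and further transitions out of $c_+$). This extra state is avoided by routing those ``dead'' transitions of $\mathcal C$ into configurations that are already dead in the product, exploiting that $\mathcal P$'s unique accepting state is an absorbing sink and that $\mathcal P_{q^\ast}$ already rejects every word in $\Gamma_n$. Once this bookkeeping is done, the remainder of the argument is a direct consequence of \cref{lem:P-all} and \cref{lem:P-exist-q} together with the $\lang_n$--$\Gamma_n$ partition.
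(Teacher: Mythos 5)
Your proposal is correct and follows essentially the same route as the paper: both take the state $q$ from \cref{lem:P-exist-q}, use \cref{lem:P-all} to rule out $\Gamma_n$, observe that $\lang_n$ and $\Gamma_n$ partition the filter language $\{0,1\}^{n}\{0,1\}^{*}\$$, and intersect $\mathcal P_q$ with an $(n+2)$-state counter DFA (the paper's $\mathcal C_n$) to obtain a DFA for $\lang_n$ of size at most $(n+2)|\mathcal P|$. The completeness worry you raise is handled in the paper simply by allowing $\mathcal C_n$ (and hence the product) to be a partial DFA, which does not affect the lower bound from \cref{theorem:DFA}.
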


\begin{proof}
According to \cref{lem:P-exist-q}, there exists a state $q$ such that $\mathcal P_q$ accepts all words in $\lang_n$.
By \cref{lem:P-all}, we also have $\mathcal P_q$ rejects all words in $\Gamma_n$.
Let $\C_n$ be the DFA which recognises the language $(0,1)^n(0,1)^*\$$. 
The DFA $\C_n$ (shown in Figure \ref{fig:DFA-goodWord}) has $n+2$ states.

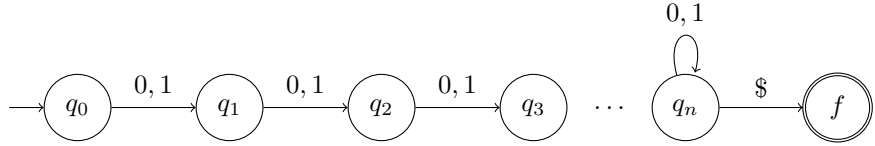
\begin{figure}[ht]
\centering
\begin{tikzpicture} [->, node distance = 2cm, auto,initial text = {}]
    \node[initial left,state] (q0) at (0,1) {$q_0$};
    \node[state] (q1) at (2,1) {$q_1$};
    \node[state] (q2) at (4,1) {$q_2$};
    \node[state] (q3) at (6,1) {$q_3$};
    \node (qi) at (7,1) {$\ldots$};
    \node[state] (qn) at (8,1) {$q_n$};
    \node[state, accepting] (qf) at (10,1) {$f$};

    \path (qn) edge [loop above] node {$0,1$} (qn);
    \path (q0) edge node {$0,1$} (q1);
    \path (q1) edge node {$0,1$} (q2);
    \path (q2) edge node {$0,1$} (q3);
    \path (qn) edge node {$\$$} (qf);
\end{tikzpicture}
\caption{The DFA $\mathcal C_n$ which recognises the language $(0,1)^n(0,1)^*\$$.}
\label{fig:DFA-goodWord}
\end{figure}

We have $\lang(P_q) \cap \lang(\C_n) = \lang_n$.
This is because: 
\begin{itemize}
\item Any word $w$ in $\lang_n$ is obviously in $\lang(\C_n)$, and in $\lang(\mathcal P_q)$ by \cref{lem:P-exist-q};
\item Any word $w$ in the intersection of $\lang(P_q)$ and $\lang(\C_n)$ must be in the language $\{0,1\}^+\$$ and have length at least $n+1$. 
Also, the $n^{\mbox{\scriptsize th}}$ letter before $\$$ in $w$ must be $1$. 
If the $n^{\mbox{\scriptsize th}}$ letter is $0$, it follows that the word is in $\Gamma_{n}$, 
which contradicts that $\mathcal P_q$ rejects all words in $\Gamma_{n}$.
Thus, $w \in \lang_n$.
\end{itemize}

We can construct a DFA which accepts $\lang(P_q) \cap \lang(\C_n)$, thus $\lang_n$, by intersecting $\mathcal P_q$ and $\C_n$.
The resulting DFA then has $n+2$ times as many states as $\mathcal P_q$, and thus at most $n+2$ times the number of states as $\mathcal P$.
Hence, $\mathcal P$ is at least $\frac{1}{n+2}$ times the size of the minimum DFA which recognises $\lang_n$.
   \end{proof}

The DFA which recognises $\lang_n$ should have at least $2^n$ states by \cref{theorem:DFA}. 
By \cref{theorem:P-main}, the automaton $\mathcal P$ should have at least $\frac{2^n}{n+2}$ states. 
This implies the simple corollary that $\D_n$ is at least as big as $\mathcal P$, and thus has at least $\frac{2^n}{n+2}$ states. 

\begin{corollary}\label{cor:Dn-exponential}
$\D_n$ has at least $\frac{2^n}{n+2}$ states.    
\end{corollary}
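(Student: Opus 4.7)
The plan is to observe that this corollary is an immediate consequence of combining \cref{theorem:DFA} and \cref{theorem:P-main}, together with the observation that $\mathcal P$ cannot have more states than $\D_n$. So the proof really only needs to spell out the chain of inequalities and justify the state-count comparison between $\D_n$ and $\mathcal P$.

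First I would recall the construction of $\mathcal P$: it is obtained from $\D_n$ by identifying all marked states into a single accepting sink, while leaving the unmarked states and the transitions among them unchanged. By \cref{lem:DBA-not-all-marked}, there is at least one unmarked state, and the marked states are nonempty (the final states are always marked on line~1, and $Q_f^{\D}$ must be nonempty since $\lang(\D_n) = \lang_n^{\omega} \neq \emptyset$). Hence the collapsing step replaces a nonempty set of marked states by a single sink, so $|\mathcal P| \leq |\D_n|$.

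Next I would chain the two lower bounds. By \cref{theorem:DFA}, any DFA recognising $\lang_n$ has at least $2^n$ states. By \cref{theorem:P-main}, $\mathcal P$ has at least $\frac{1}{n+2}$ times as many states as a minimum DFA for $\lang_n$, so $|\mathcal P| \geq \frac{2^n}{n+2}$. Combining with $|\D_n| \geq |\mathcal P|$ yields $|\D_n| \geq \frac{2^n}{n+2}$, as required.

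There is no real obstacle here: the only thing to be a bit careful about is the collapsing argument, which is straightforward once one notes that identifying several states into one can never increase the state count. All the heavy lifting has already been done in \cref{theorem:DFA} and \cref{theorem:P-main}; the corollary is essentially a one-line composition.
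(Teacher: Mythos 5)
Your proposal is correct and follows the same route as the paper: it chains Theorem~\ref{theorem:DFA} with Theorem~\ref{theorem:P-main} and observes that $\mathcal P$, being obtained from $\D_n$ by collapsing the (nonempty) set of marked states into a single sink, has at most as many states as $\D_n$. The extra care you take in justifying $|\mathcal P| \leq |\D_n|$ is a slight elaboration of what the paper states in one line, but the argument is identical.
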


Noting that $\D_n$ can be chosen minimal, it is now easy to infer that GfM B\"uchi automata are exponentially more succinct than DBA by using the result that GfG B\"uchi automata are only up to quadratically more succinct than DBA for the same language \cite{KuperbergS15}.
The size of GfG automata is therefore in $\Omega(\sqrt{2^n/n})$ (or: $\Omega(2^{n/2}/\sqrt{n})$).


\begin{corollary}\label{cor:Gn-exponential}
GfM B\"uchi automata are exponentially more succinct than GfG B\"uchi automata.    
\end{corollary}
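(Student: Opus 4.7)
The plan is to combine three ingredients that are by now in place: the existence of a small GfM automaton, a lower bound on equivalent DBAs, and the known quadratic upper bound on the succinctness gap between GfG B\"uchi automata and DBAs. Concretely, for each $n$, the GfM NBA $\G_n$ has only $n+2$ states by construction, while by \cref{cor:Dn-exponential} every DBA recognising $\lang_n^{\omega}$ has at least $\frac{2^n}{n+2}$ states.

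Next I would invoke the result of Kuperberg and Skrzypczak~\cite{KuperbergS15}: every GfG B\"uchi automaton of size $k$ can be determinised into a DBA of size $O(k^2)$ recognising the same language. Suppose, for contradiction, that some GfG B\"uchi automaton $\mathcal H_n$ with $k_n$ states recognises $\lang_n^{\omega}$. Determinising $\mathcal H_n$ yields a DBA of size $O(k_n^2)$ for $\lang_n^{\omega}$, so combining with \cref{cor:Dn-exponential} gives $O(k_n^2) \geq \frac{2^n}{n+2}$, whence
\[
k_n \;\in\; \Omega\!\left(\sqrt{2^n/(n+2)}\right) \;=\; \Omega(2^{n/2}/\sqrt{n})\ .
\]

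Putting the two bounds side by side shows that the GfM family $(\G_n)_{n\in \nat}$ has linearly many states in $n$, while every sequence of GfG B\"uchi automata for the same languages $\lang_n^\omega$ has at least $2^{\Omega(n)}$ states. This is the desired exponential succinctness gap.

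There is essentially no obstacle here beyond stitching the pieces together carefully: \cref{lem:gn_correct} and \cref{lem:gfm} certify that $\G_n$ is a correct GfM automaton for $\lang_n^{\omega}$, \cref{cor:Dn-exponential} provides the exponential DBA lower bound, and the quadratic GfG-to-DBA translation from \cite{KuperbergS15} is applied black-box. The only subtlety worth flagging is to state the gap as a ratio between sizes of families recognising the same sequence of languages, so that the $\Omega(2^{n/2}/\sqrt n)$ lower bound for GfG automata is compared against the $O(n)$ upper bound for GfM automata, making the exponential separation explicit.
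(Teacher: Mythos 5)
Your proof is correct and follows essentially the same route as the paper: exhibit the $(n+2)$-state GfM automaton $\G_n$, invoke \cref{cor:Dn-exponential} for the $\frac{2^n}{n+2}$ lower bound on DBAs, and apply the quadratic GfG-to-DBA bound of \cite{KuperbergS15} to conclude that GfG automata for $\lang_n^\omega$ need $\Omega(2^{n/2}/\sqrt{n})$ states. The paper states this in exactly the same way in the paragraph preceding the corollary, so no further comment is needed.
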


\section{GfM vs General Nondeterminism}
\label{section:GfMvsNondeterminism}
In this section, we show that general nondeterministic automata are exponentially more succinct than GfM automata.
As we can see in the following, this even holds when the languages are restricted to reachability or safety languages.
Using the same languages and general NBAs, we also observe unambiguous and separating automata are exponentially more succinct than GfM automata. Missing proofs can be found in \cref{appendix}.

The reachability languages we use are the extension of $\lang_n$ to reachability:
\[\lang_{n}^r = \{w \{0,1,\$\}^{\omega} \mid w \in \lang_n\}\ ,\]
the language of infinite words that contain at least one $\$$, as well as a $1$ that occurs exactly $n$ letters before this first $\$$.
In particular, words where the first $\$$ is one of the first $n$ letters is not part of $\lang_n^r$.



An NBA $\mathcal R_n$ that recognises $\lang_n^r$ is shown in \cref{fig:NRA}.
The automaton is a reachability automaton as there is only one accepting state, which is a sink. 

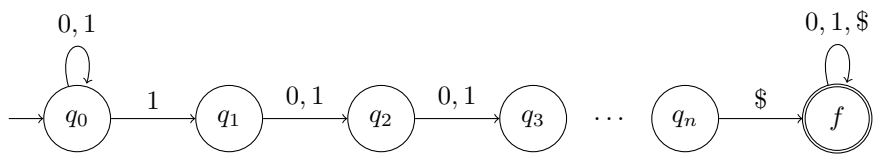
\begin{figure}[ht]
\centering
\begin{tikzpicture} [->, node distance = 2cm, auto,initial text = {}]
    \node[initial left,state] (q0) at (0,1) {$q_0$};
    \node[state] (q1) at (2,1) {$q_1$};
    \node[state] (q2) at (4,1) {$q_2$};
    \node[state] (q3) at (6,1) {$q_3$};
    \node (qi) at (7,1) {$\ldots$};
    \node[state] (qn) at (8,1) {$q_n$};
    \node[state, accepting] (qf) at (10,1) {$f$};

    \path (q0) edge [loop above] node {$0,1$} (q0);
    \path (q0) edge node {$1$} (q1);
    \path (q1) edge node {$0,1$} (q2);
    \path (q2) edge node {$0,1$} (q3);
    \path (qn) edge node {$\$$} (qf);
    \path (qf) edge [loop above] node {$0,1,\$$} (qf);

\end{tikzpicture}
\caption{The reachability NBA $\mathcal R_n$ that recognises $\lang_n^r$.}
\label{fig:NRA}
\end{figure}

The safety languages we use are the safety closure of an extension of words in $\lang_n$ by infintely many $\$$s: 
\[\lang_{n}^s = \{w \$^{\omega} \mid w \in \lang_n\} \cup  \{0, 1\}^{\omega}\ , \]

An NBA $\mathcal S_n$ that recognises $\lang_n^s$ is shown in \cref{fig:NSA}.
The automaton is a safety automaton as all states are accepting. 


\begin{figure}[ht]
\centering
\begin{tikzpicture} [->, node distance = 2cm, auto,initial text = {}]
    \node[initial left,state, accepting] (q0) at (0,1) {$q_0$};
    \node[state, accepting] (q1) at (2,1) {$q_1$};
    \node[state, accepting] (q2) at (4,1) {$q_2$};
    \node[state, accepting] (q3) at (6,1) {$q_3$};
    \node (qi) at (7,1) {$\ldots$};
    \node[state, accepting] (qn) at (8,1) {$q_n$};

    \path (q0) edge [loop above] node {$0,1$} (q0);
    \path (q0) edge node {$1$} (q1);
    \path (q1) edge node {$0,1$} (q2);
    \path (q2) edge node {$0,1$} (q3);
    \path (qn) edge [loop above] node {$\$$} (qn);

\end{tikzpicture}
\caption{The safety NBA $\mathcal S_n$ that recognises $\lang_n^s$.}
\label{fig:NSA}
\end{figure}

Next, we show that a GfM automaton for these languages needs to have $2^n$ states, not counting accepting or rejecting sinks.
For that, we construct a family of Markov chains as shown in \cref{fig:MC-big-GfM}.
We show that for an automaton, which recognises $\lang_n^r$ or $\lang_n^s$, to be good for this family of Markov chains, it needs at least $2^n$ states.

\begin{figure}[ht]
\centering
\begin{tikzpicture} [->, node distance = 2cm, auto,initial text = {}]
    \node[initial left,state] (q0) at (0,1) {$s_0$};
    \node[state] (q1) at (2,1) {$s_1$};
    \node[state] (q2) at (4,1) {$s_2$};
    \node[state] (q3) at (6,1) {$s_3$};
    \node (qi) at (7,1) {$\ldots$};
    \node[state] (qn) at (8,1) {$s_n$};
    \node[state] (qf) at (10,1) {$s_f$};

    \path (q0) edge node {$\textcolor{red}{1}:\sigma_1$} (q1);
    \path (q1) edge node {$\textcolor{red}{1}:\sigma_2$} (q2);
    \path (q2) edge node {$\textcolor{red}{1}:\sigma_3$} (q3);
    \path (qn) edge [loop above] node {$\textcolor{red}{\frac{1}{4}}: 0$} (qn);
    \path (qn) edge [loop below] node {$\textcolor{red}{\frac{1}{4}}: 1$} (qn);
    \path (qn) edge node {$\textcolor{red}{\frac{1}{2}}: \$$} (qf);
    \path (qf) edge [loop above] node {$\textcolor{red}{1}:\$$} (qf);

\end{tikzpicture}
\caption{A family of Markov chains, where $\sigma_1,\ldots,\sigma_n\in \{0,1\}$.
For a Markov chain with fixed values of $\sigma_1,\ldots,\sigma_n$,
the chance that it produces a word in $\lang_n^r$ (and $\lang_n^s$) is $2^{-(n+1)}+ \sum_{i=1}^{n} \sigma_i 2^{-i}$.
This is also the value that the product from the automaton state and $s_n$ has when $s_n$ is first reached.
Thus, $s_n$ must be reached with different states of a GfM automaton that recognises $\lang_n^r$ (or $\lang_n^s$) that resolves its nondeterminism optimally for different parameters $\sigma_1,\ldots,\sigma_n$, and there are $2^n$ combinations.}
\label{fig:MC-big-GfM}
\end{figure}
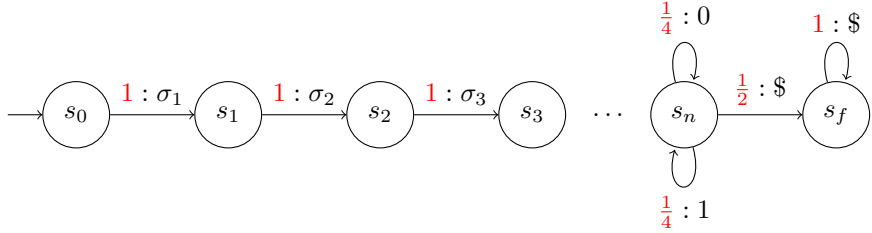

\begin{theorem}\label{theorem:GFM-exponential}
 A GfM automaton that recognises $\lang_n^r$ or $\lang_n^s$ has at least $2^n$ states.   
\end{theorem}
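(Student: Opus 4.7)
The plan is to exploit the family of Markov chains $\M^\sigma$ from \cref{fig:MC-big-GfM}, parametrised by $\sigma=(\sigma_1,\ldots,\sigma_n)\in\{0,1\}^n$, together with the figure caption's observation that the semantic satisfaction probability $c_\sigma = 2^{-(n+1)}+\sum_{i=1}^n\sigma_i\,2^{-i}$ takes $2^n$ pairwise distinct values in $(0,1)$. For a GfM automaton $\A$ recognising $\lang_n^r$ (or $\lang_n^s$), the plan is to extract, for each $\sigma$, a distinct state of $\A$ whose ``value from $s_n$'' equals $c_\sigma$; the pairwise distinctness of the $c_\sigma$ then forces at least $2^n$ distinct states.

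First, I would confirm the closed form for $c_\sigma$ by conditioning on the number $k$ of self-loops that $\M^\sigma$ takes at $s_n$ before first emitting $\$$: when $k<n$ the $n^{\mbox{\scriptsize th}}$ letter before the first $\$$ is the deterministic bit $\sigma_{k+1}$, and when $k\ge n$ it is a uniformly random bit, and a geometric sum yields the claim (for $\lang_n^s$ the only additional remark is that $\M^\sigma$ almost surely eventually emits $\$$, so the $\{0,1\}^\omega$ clause in $\lang_n^s$ contributes probability zero). Second, for every state $q$ of $\A$ I would define $v(q)=\mathrm{PSyn}_{\A_q}^{\M^*}(s_n)$, where $\M^*$ is the sub-Markov-chain consisting of $s_n$ and $s_f$ only. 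The key semantic bound is that whenever $q\in\delta(q_0,\sigma)$, every suffix accepted from $q$ yields, after prepending $\sigma_1\cdots\sigma_n$, a word accepted from $q_0$; hence $\lang(\A_q)\subseteq\{w\mid\sigma w\in\lang(\A)\}$, and combined with the generic inequality $\mathrm{PSyn}\le\mathrm{PSem}$ this gives $v(q)\le c_\sigma$.

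Third, since $\sigma_1\cdots\sigma_n$ is produced deterministically in $\M^\sigma$, any strategy on $\M^\sigma\times\A$ reaches $s_n$ with probability one and induces, at that moment, some distribution $\mu^\sigma$ on $\delta(q_0,\sigma)$; by the Markov property the overall syntactic value decomposes as $\sum_q\mu^\sigma(q)\,v(q)$, whose supremum over the available $\mu^\sigma$ is $\max_{q\in\delta(q_0,\sigma)}v(q)$. The GfM identity $\mathrm{PSyn}=\mathrm{PSem}=c_\sigma$ therefore forces some $q_\sigma\in\delta(q_0,\sigma)$ with $v(q_\sigma)=c_\sigma$. As the $c_\sigma$ are pairwise distinct, the map $\sigma\mapsto q_\sigma$ is injective, so $\A$ has at least $2^n$ states; since $0<c_\sigma<1$, none of these witnesses is an accepting or rejecting sink.

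The main obstacle I anticipate is making the decomposition in the third step fully rigorous, namely showing that the supremum syntactic value on $\M^\sigma\times\A$ really factors as a convex combination $\sum_q\mu^\sigma(q)\,v(q)$ over a fixed family of values $v(q)$. It is intuitive because no information is revealed along the deterministic prefix and $\M^*$ is memoryless, but one still has to handle arbitrary history-dependent and possibly randomised strategies; the clean remedy is either a memoryless-strategy reduction (justified because the product is finite and the objective is B\"uchi on a reachability/safety structure) or a direct application of the tower rule conditioned on the first visit of $(s_n,\cdot)$, which collapses the strategy into a choice of $\mu^\sigma$ followed by an optimal continuation. Once this decomposition is in hand, everything else is bookkeeping with geometric series and the inequality $\mathrm{PSyn}\le\mathrm{PSem}$.
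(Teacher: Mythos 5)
Your proposal is correct and takes essentially the same route as the paper: both use the family of Markov chains of \cref{fig:MC-big-GfM}, the $2^n$ pairwise distinct acceptance probabilities $c_\sigma = 2^{-(n+1)}+\sum_{i=1}^{n}\sigma_i 2^{-i}$, and the fact that the continuation from $s_n$ is identical across the family, so distinct values force distinct automaton states at the first visit to $s_n$. Your second and third steps (the value function $v(q)$ on states, the bound $v(q)\le c_\sigma$ via $\mathrm{PSyn}\le\mathrm{PSem}$, and the conditioning at the first visit of $s_n$) simply make rigorous what the paper asserts informally, so there is nothing to correct.
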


\begin{proof}
Let $n > 0$ and $\sigma_1,\ldots,\sigma_n \in \{0,1\}$. 
The family of Markov chains is shown in \cref{fig:MC-big-GfM}.
Since there are $2^n$ combinations of $\sigma_1,\ldots,\sigma_n$, we have $2^n$ Markov chains in this family. 

The chance that a Markov chain in this family produces a word in $\lang_n^r$ (and $\lang_n^s$) is $2^{-(n+1)}+ \sum_{i=1}^{n} \sigma_i 2^{-i}$.
We briefly explain how we obtain this value.
The chance that the first $\$$ is $i \leq n$ steps after reaching $s_n$ is $2^{-i}$, and the relative chance of winning then is $\sigma_i 2^{-i}$; this is the $\sum_{i=1}^{n} \sigma_i 2^{-i}$ part.
The chance that there never is a $\$$ is $0$.
The chance that there is a $\$$ after $>n$ steps after reaching $s_n$ is $2^{-n}$. The chance of winning in this case is $\frac{1}{2}$ (as a $0$ and $1$ exactly $n$ steps earlier are equally likely); this is the $2^{-(n+1)}$ part.

Let $\G$ be a GfM automaton that recognises $\lang_n^r$ (or $\lang_n^s$).
Now consider its product with one of the Markov chains $\M$. 
The chance that $\M$ produces a word in $\lang_n^r$ (and $\lang_n^s$) should also be the value of the state (a pair of $s_n$ and a state in $\G$) in the product $\M \times \G$ when $s_n$ is first reached.
Since the sub Markov chain that contain $s_n$ and $s_f$ is the same for all Markov chains in this family, 
in the product MDP $\M \times \G$,
the state (a pair of $s_n$ and a state in $\G$) when $s_n$ is first reached must be different. 
That is, $s_n$ must be paired with different states of $\G$ for different Markov chains in this family.
As there are $2^n$ different Markov chains,
there are at least $2^n$ states in $\G$.
\end{proof}

It can be seen from \cref{fig:NRA} resp.\ \cref{fig:NSA} that $\lang_n^r$ resp.\ $\lang_n^s$ are recognised by a nondeterministic reachability resp.\ safety automaton with $n+2$ resp.\ $n+1$ states. 
With \cref{theorem:GFM-exponential}, we have:
\begin{corollary}\label{cor:nba-succinct}
General nondeterministic automata are exponentially more succinct than GfM automata.    
\end{corollary}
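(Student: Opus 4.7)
The plan is to establish the corollary by directly combining the witness constructions already displayed in the preceding figures with the lower bound from \cref{theorem:GFM-exponential}. Since exponential succinctness is a statement about a family of languages, I would fix the family $\{\lang_n^r\}_{n \in \nat}$ (and, in parallel, $\{\lang_n^s\}_{n \in \nat}$) and compare the sizes of the smallest nondeterministic and the smallest GfM automata recognising each member.

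For the upper bound on the nondeterministic side, I would simply invoke the automata $\mathcal R_n$ of \cref{fig:NRA} and $\mathcal S_n$ of \cref{fig:NSA}: by inspection they recognise $\lang_n^r$ and $\lang_n^s$, respectively, and use only $n+2$ and $n+1$ states. (If one wants to be careful, one could separately verify correctness, but both are completely routine: $\mathcal R_n$ guesses the position of the witnessing $1$, and $\mathcal S_n$ is its safety variant.) For the matching lower bound on the GfM side, I would appeal to \cref{theorem:GFM-exponential}, which guarantees that any GfM automaton recognising $\lang_n^r$ or $\lang_n^s$ must have at least $2^n$ states. Putting the two bounds together yields a ratio of $\frac{2^n}{n+2}$ (respectively $\frac{2^n}{n+1}$), which is in $2^{\Omega(n)}$.

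Concretely, I would write the proof as a short two-sentence argument: \emph{by \cref{theorem:GFM-exponential} every GfM automaton for $\lang_n^r$ (or $\lang_n^s$) has at least $2^n$ states, while $\mathcal R_n$ (resp.\ $\mathcal S_n$) witnesses that a nondeterministic automaton of size $n+2$ (resp.\ $n+1$) suffices; hence the succinctness gap is at least $\frac{2^n}{n+2} \in 2^{\Omega(n)}$.} There is no real obstacle here; the entire content is the combination of the two results, and in particular \cref{theorem:GFM-exponential} is the non-trivial ingredient that has just been proved. The only minor stylistic decision is whether to state the gap for both languages simultaneously or to phrase the corollary once and remark that the same witnesses additionally give the refined statements (reachability, safety, unambiguous, separating) discussed in the outline.
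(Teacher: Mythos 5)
Your proposal is correct and matches the paper's own argument exactly: the paper also derives the corollary by pairing the $n{+}2$-state automaton $\mathcal R_n$ (resp.\ the $n{+}1$-state $\mathcal S_n$) as the nondeterministic upper bound with the $2^n$ lower bound of \cref{theorem:GFM-exponential} for any GfM automaton recognising $\lang_n^r$ or $\lang_n^s$. Nothing is missing.
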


We now consider the $\mathcal R_n$ and $\mathcal S_n$ more closely to see that we have used small subclasses of nondeterministic B\"uchi automata: the $\mathcal R_n$ are unambiguous reachability automata, while $\mathcal S_n$ is a strongly unambiguous separating safety automata. Let us start with the latter.

\begin{restatable}{lemma}{lemSnStrongUnambiguous}
For each $n\in \mathbb N$, $\mathcal S_n$ is a strongly unambiguous separating safety automaton.
\end{restatable}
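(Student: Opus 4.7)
The plan is to verify the three properties---safety, separation, and strong unambiguity---in turn. Safety is immediate from \cref{fig:NSA}, since every state of $\mathcal S_n$ is accepting.

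The preparatory step for the other two properties is to compute the language $\lang(\mathcal S_n,q)$ from each state $q$. The chain structure of $\mathcal S_n$ makes this routine: from $q_i$ with $1\le i\le n-1$ the only transitions go to $q_{i+1}$ on $0$ and $1$, and from $q_n$ the only transition is the $\$$-loop, so $\lang(\mathcal S_n,q_i)=\{0,1\}^{n-i}\$^\omega$ for $1\le i\le n-1$ and $\lang(\mathcal S_n,q_n)=\{\$^\omega\}$; by construction $\lang(\mathcal S_n,q_0)=\lang_n^s$.

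For separation, I would observe that for each $i\in\{1,\dots,n\}$ every word in $\lang(\mathcal S_n,q_i)$ has its first $\$$ at position exactly $n-i+1$, so the languages $\lang(\mathcal S_n,q_1),\dots,\lang(\mathcal S_n,q_n)$ have pairwise distinct first-$\$$ positions and are therefore pairwise disjoint; words in $\lang(\mathcal S_n,q_0)$ either contain no $\$$ or have their first $\$$ at position $\ge n+1$ (because every $w\in\lang_n$ has length $\ge n+1$), so $\lang(\mathcal S_n,q_0)$ is disjoint from each of $\lang(\mathcal S_n,q_1),\dots,\lang(\mathcal S_n,q_n)$ as well.

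The main step---and the only one requiring a small case analysis---is strong unambiguity. The only nondeterministic choice in $\mathcal S_n$ occurs at $q_0$ on reading $1$, where the run may either stay at $q_0$ or move to $q_1$. Fixing an input $w\in\{0,1,\$\}^\omega$, I would distinguish two cases: if $w\in\{0,1\}^\omega$, then any move to $q_1$ pushes the automaton into $q_n$ after $n-1$ further letters where the required $\$$ is absent, so the only non-blocking infinite run stays at $q_0$ throughout; if $w$ contains a $\$$, then an infinite run must eventually loop at $q_n$ on a $\$$-suffix, which forces $q_n$ to be reached at exactly the position of the first $\$$ in $w$, and this in turn fixes the unique position $n$ letters earlier at which the move to $q_1$ must occur (if it is feasible at all). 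Either way at most one infinite run of $\mathcal S_n$ exists on $w$, which is the strong unambiguity claim.
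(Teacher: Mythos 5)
Your proof is correct and follows essentially the same route as the paper's: safety is immediate, separation is obtained from the per-state languages (your exponent in $\{0,1\}^{n-i}\$^\omega$ is the right one; the paper's write-up states $\{0,1\}^{i-1}\$^\omega$, an apparent index slip), and strong unambiguity is argued by the same case split on whether the word contains a $\$$, forcing the move to $q_1$ exactly $n$ letters before the first $\$$. Nothing further is needed.
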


With Theorem \ref{theorem:GFM-exponential}, this lemma provides:

\begin{corollary}\label{cor:separated-succinct}
Strongly unambiguous separating automata are exponentially more succinct than GfM automata.    
\end{corollary}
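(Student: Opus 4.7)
The plan is to present the corollary as an immediate combination of the two pieces of machinery built up in Section~\ref{section:GfMvsNondeterminism}: the preceding restatable lemma about $\mathcal S_n$, and the lower bound in Theorem~\ref{theorem:GFM-exponential}. My proof would instantiate the notion of ``exponentially more succinct'' recalled in the introduction by exhibiting an explicit family of languages $\lang_n^s$ together with matching upper and lower size bounds in the two classes.

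For the upper bound, I would simply recall the automaton $\mathcal S_n$ from Figure~\ref{fig:NSA}, which recognises $\lang_n^s$ and has $n+1$ states. The lemma immediately preceding the corollary (which I am allowed to assume) certifies that $\mathcal S_n$ is a strongly unambiguous separating safety automaton. Hence the family $(\lang_n^s)_{n \in \mathbb N}$ is recognised by strongly unambiguous separating (in fact safety) automata of size $n+1$, which is linear in $n$.

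For the matching lower bound, I would invoke Theorem~\ref{theorem:GFM-exponential} on the very same family: any GfM automaton recognising $\lang_n^s$ must have at least $2^n$ states. Putting the two bounds together yields $n+1$ states on the strongly unambiguous separating side and $2^n$ states on the GfM side for a common language $\lang_n^s$, which is precisely the $2^{\Omega(n)}$ blow-up required by the definition of exponential succinctness used throughout the paper.

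I do not anticipate any real obstacle: all the technical content has already been established, both the structural properties of $\mathcal S_n$ (via the preceding lemma) and the GfM lower bound (via the parameterised Markov chains of Figure~\ref{fig:MC-big-GfM}). The only thing worth stating explicitly is that the upper bound $n+1$ and the lower bound $2^n$ refer to the same languages $\lang_n^s$, so that the separation is witnessed by a single uniform family; this is immediate from the fact that both the lemma and Theorem~\ref{theorem:GFM-exponential} are stated about $\mathcal S_n$ and $\lang_n^s$ respectively. The proof therefore amounts to a one-paragraph stitching of the two cited results, closing the gap in the restricted-subclass direction claimed in the outline.
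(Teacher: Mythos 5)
Your proposal is correct and matches the paper's own (implicit) argument exactly: the corollary is obtained by combining the preceding lemma certifying that $\mathcal S_n$ is a strongly unambiguous separating safety automaton of linear size for $\lang_n^s$ with the $2^n$ lower bound of Theorem~\ref{theorem:GFM-exponential} on the same language family. Nothing further is needed.
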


\begin{restatable}{lemma}{lemRnUnambiguous}
    For each $n\in \mathbb N$, $\mathcal R_n$ is an unambiguous reachability automaton.
\end{restatable}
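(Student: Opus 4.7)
The plan is to read off both properties directly from the structure of $\mathcal R_n$ in Figure~\ref{fig:NRA}. The reachability claim is immediate: $f$ is the unique final state and, via its self-loop on $\{0,1,\$\}$, is a sink. Hence $\mathcal R_n$ fits the definition of a reachability automaton verbatim.

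For unambiguity, I would first isolate the only source of nondeterminism, namely the split at $q_0$ on the letter $1$ between the self-loop and the edge to $q_1$. All other existing transitions are deterministic, and from $q_1$ onwards the automaton has a single path $q_1 \to q_2 \to \cdots \to q_n \to f$ that is forced once the branch has been taken. Thus any accepting run on an input word $\alpha \in \Sigma^\omega$ is fully described by the position $i$ at which it commits to $q_0 \to q_1$; acceptance then demands $\alpha_i = 1$, $\alpha_{i+1},\ldots,\alpha_{i+n-1}\in\{0,1\}$, and $\alpha_{i+n}=\$$.

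The heart of the argument is therefore to prove that $i$ is uniquely determined by $\alpha$. Suppose for contradiction that two accepting runs $r$ and $r'$ branch at distinct positions $i<i'$. Two simple facts fight each other: (a) since $q_0$ has no $\$$-transition, $r'$ can only remain at $q_0$ up to time $i'-1$ if $\alpha_1\ldots\alpha_{i'-1}\in\{0,1\}^{i'-1}$; and (b) since $r$ accepts, $\alpha_{i+n}=\$$. If $i'>i+n$, then (a) and (b) collide on the letter $\alpha_{i+n}$. If $1\le i'-i\le n$, then at time $i+n-1$ the second run $r'$ still lives in one of the intermediate states $q_0,q_1,\ldots,q_{n-1}$ (depending on $i'-i$), none of which admits a $\$$-transition; so $r'$ could not have been extended to an infinite run in the first place, contradicting its existence.

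The only mildly delicate step is the bookkeeping inside the case $1\le i'-i\le n$, where one has to track which $q_j$ the second run occupies at time $i+n-1$ and verify that no such $q_j$ can absorb a $\$$. Everything else is inspection of the transition table, so no deeper technique is needed. With $i$ pinned down by $\alpha$, the accepting run is unique, and $\mathcal R_n$ is an unambiguous reachability automaton.
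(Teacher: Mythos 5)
Your proof is correct and follows essentially the same route as the paper's: the reachability part is read off the definition, and unambiguity is established by showing the single nondeterministic choice (the step from $q_0$ to $q_1$) is forced, since branching too early leaves the run stuck in $q_n$ on a $0$ or $1$, while branching too late (or not at all) leaves it in a state with no $\$$-transition when the $\$$ arrives. Your two-run comparison is just a slightly more explicit packaging of the paper's direct observation that the only accepting run on a word $\{0,1\}^m1\{0,1\}^{n-1}\$\{0,1,\$\}^\omega$ is $q_0^{m+1},q_1,\ldots,q_n,f^\omega$.
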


As both $\mathcal R_n$ and $\mathcal S_n$ are unambiguous automata by the previous lemmas, Theorem \ref{theorem:GFM-exponential} provides:

\begin{corollary}\label{cor:unambiguous-succinct}
Unambiguous reachability and safety automata are exponentially more succinct than GfM automata.    
\end{corollary}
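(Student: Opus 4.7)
The plan is to show that, for every $\omega$-word $\alpha \in \{0,1,\$\}^{\omega}$, the automaton $\mathcal{R}_n$ admits at most one accepting run on $\alpha$; the fact that $\mathcal{R}_n$ is a reachability automaton is immediate from \cref{fig:NRA}, as $f$ is the unique accepting state and is a sink, so there is nothing further to check for that part.

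The first step is a structural analysis of the runs. The only nondeterministic choice in $\mathcal{R}_n$ is made at $q_0$ on letter $1$, where the automaton may either stay at $q_0$ via the self-loop or move to $q_1$. Every other transition is either deterministic or undefined. Moreover, the only way to reach the sole accepting state $f$ is via the unique path $q_0 \to q_1 \to q_2 \to \cdots \to q_n \to f$, where the last $n$ transitions are completely deterministic. Consequently, an accepting run is characterised by a single parameter: the position $k \geq 0$ at which the transition $q_0 \to q_1$ is taken.

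The second step is to show that this branching position is forced by $\alpha$. Suppose an accepting run branches at position $k$, i.e.\ reads $\alpha_k$ via $q_0 \to q_1$. Then $\alpha_k = 1$; the prefix $\alpha_0 \cdots \alpha_{k-1}$ must lie in $\{0,1\}^{*}$ since $q_0$ has no $\$$-transition; the block $\alpha_{k+1} \cdots \alpha_{k+n-1}$ must lie in $\{0,1\}^{n-1}$ so as to traverse $q_1, \ldots, q_n$; and $\alpha_{k+n} = \$$ is needed to move from $q_n$ to $f$. Together these conditions imply that $\alpha_{k+n}$ must be the \emph{first} $\$$ occurring in $\alpha$, so $k$ is uniquely determined by $\alpha$ as $k = j - n$, where $j$ is the index of the first $\$$ in $\alpha$ (and no accepting run exists if $\alpha$ has no $\$$, if $j < n$, or if $\alpha_{j - n} \neq 1$). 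After reaching $f$, the run is deterministic via the self-loop on $\{0, 1, \$\}$.

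I expect the only point needing a little care to be the observation that $\$$ cannot appear anywhere earlier than the intended position $k + n$: the absence of $\$$-transitions at $q_0, q_1, \ldots, q_{n-1}$ means that any prior $\$$ immediately blocks any candidate accepting run, which is what pins the branch to exactly one position. Once this is laid out the conclusion is bookkeeping: the unique candidate run is accepting precisely when $\alpha \in \lang_n^r$, giving unambiguity of $\mathcal{R}_n$.
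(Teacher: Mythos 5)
Your analysis of $\mathcal R_n$ is correct, and it is essentially the paper's own argument for the corresponding lemma: the only nondeterministic choice is the $q_0\to q_1$ branch on a $1$, and the absence of $\$$-transitions at $q_0,q_1,\ldots,q_{n-1}$ pins that branch to the position exactly $n$ letters before the \emph{first} $\$$, so every word has at most one accepting run. So far so good.

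As a proof of the corollary, however, the proposal has two genuine gaps. First, the corollary is about unambiguous \emph{reachability and safety} automata, and you never address the safety half. The paper covers it with a separate lemma showing that $\mathcal S_n$ (\cref{fig:NSA}) is a (strongly unambiguous, in fact separating) safety automaton for $\lang_n^s$; that argument is not subsumed by yours --- in particular one must check that a word containing no $\$$ has $q_0^\omega$ as its \emph{only} run, because any run leaving $q_0$ marches deterministically to $q_n$ and then blocks, $q_n$ having only a $\$$-loop. Second, unambiguity of $\mathcal R_n$ by itself says nothing about succinctness. The corollary follows only by combining the unambiguity lemmas with the lower bound of \cref{theorem:GFM-exponential} --- any GfM automaton recognising $\lang_n^r$ or $\lang_n^s$ needs at least $2^n$ states --- against the fact that $\mathcal R_n$ has $n+2$ states and $\mathcal S_n$ has $n+1$. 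Your proposal never invokes the lower bound or the state counts, so the exponential gap that the statement actually asserts is never derived. What you have written is a correct proof of one of the two supporting lemmas, not of the corollary.
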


Finally, we can create a separating automaton with $n+2$ states that recognises the reachability language $\lang_n^r$, shown in Figure \ref{fig:SRA}. However, one cannot expect such automata to be (syntactic) reachability automata, because all words are accepted from an accepting sink state.

\begin{figure}[ht]
\centering
\begin{tikzpicture} [->, node distance = 2cm, auto,initial text = {}]
    \node[initial left,state] (q0) at (0,1) {$q_0$};
    \node[state] (q1) at (2,1) {$q_1$};
    \node[state] (q2) at (4,1) {$q_2$};
    \node[state] (q3) at (6,1) {$q_3$};
    \node (qi) at (7,1) {$\ldots$};
    \node[state, accepting] (qn) at (8,1) {$q_n$};
    \node[state, accepting] (qf) at (10,1) {$f$};

    \path (q0) edge [loop above] node {$0,1$} (q0);
    \path (q0) edge [below] node {$1$} (q1);
    \path (q1) edge [below] node {$0,1$} (q2);
    \path (q2) edge [below] node {$0,1$} (q3);
    \path (qn) edge node {$\$$} (qf);
    \path (qn) edge [bend right=40] node [near end,below] {$\$$} (q3);
    \path (qn) edge [bend right=40] node [near end,below] {$\$$} (q2);
    \path (qn) edge [bend right=40] node [near end,below] {$\$$} (q1);
    \path (qn) edge [bend right=40] node [near end,below] {$\$$} (q0);
    \path (qf) edge [loop above] node {$0,1$} (qf);
    \path (qn) edge [loop above] node {$\$$} (qn);
    \path (qf) edge [bend left=27] node [below] {$0$} (q1);

\end{tikzpicture}
\caption{The separating NBA $\mathcal R_n'$ recognises the reachability language $\lang_n^r$.}
\label{fig:SRA}
\end{figure}
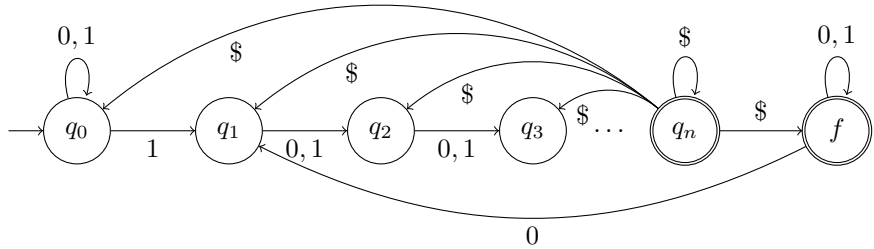

\begin{restatable}{lemma}{lemRnpSeparating}
\label{lem:Rn'-separating}
For each $n\in \mathbb N$, $\mathcal R_n'$ is a separating automaton that recognises $\lang_n^r$.
\end{restatable}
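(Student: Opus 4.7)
The plan is to establish the two parts of the claim separately: that $\lang(\mathcal R_n') = \lang_n^r$ and that $\mathcal R_n'$ is separating. The structural observation driving both arguments is that from any middle state $q_i$ with $1 \leq i \leq n-1$ the only outgoing transitions read $\{0,1\}$, that among $\{q_1,\ldots,q_n\}$ only $q_n$ has a $\$$-transition, and that $f$ has no $\$$-transition. Together these facts pin down, for each state, how far away the first $\$$ of a word with a surviving run must be.

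For $\lang(\mathcal R_n')\subseteq\lang_n^r$, I trace an arbitrary accepting run from $q_0$. Before any $\$$ is read the run either self-loops at $q_0$ or commits via $q_0\xrightarrow{1}q_1$ and is then forced deterministically through $q_2,\ldots,q_n$ on the next $n-1$ letters from $\{0,1\}$. Any earlier $\$$ would kill the run, so the first $\$$ of the word occurs at some position $j+n+1$ and is preceded by the $1$ taken at position $j+1$, i.e.\ exactly $n$ positions earlier; this is the defining condition of $\lang_n^r$.

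For $\lang_n^r\subseteq\lang(\mathcal R_n')$, given $x$ with first $\$$ at position $k\geq n+1$ and $x_{k-n}=1$, I build an accepting run by staying at $q_0$ until position $k-n-1$, switching to $q_1$ on $x_{k-n}$, running deterministically to $q_n$ at position $k-1$, and consuming the $\$$ at position $k$. The delicate part is continuing on the suffix: whenever the run reads a $\$$ at some position $k$ while at $q_n$, I choose the target of $q_n\xrightarrow{\$}\cdot$ according to the gap $m$ to the next $\$$. Specifically, go to $f$ (and self-loop on $\{0,1\}^\omega$) if no further $\$$ exists; self-loop at $q_n$ if $m=1$; go to $q_{n-m+1}$ if $2\leq m\leq n$, so the intermediate $\{0,1\}$-letters return the run to $q_n$ in time; and go to $q_0$ or to $f$ if $m\geq n+1$, according to whether the letter at position $(k+m)-n$ is $1$ or $0$, then iterate. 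In every case an accepting state ($q_n$ or $f$) is visited infinitely often.

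For separation I exhibit, for each state $q$, a necessary condition on words in the language of $\mathcal R_n'$ started at $q$, and observe pairwise incompatibility. The analysis above gives: from $q_0$, first $\$$ at position $\geq n+1$ with letter $1$ exactly $n$ positions before the $\$$; from $q_i$ with $1\leq i\leq n-1$, first $\$$ at position exactly $n-i+1$; from $q_n$, the first letter is $\$$; and from $f$, the only route toward a $\$$ is $f\xrightarrow{0}q_1\xrightarrow{0,1}\cdots\xrightarrow{0,1}q_n\xrightarrow{\$}$ after some self-loops at $f$, so an accepted word either has no $\$$ or has first $\$$ at position $\geq n+1$ with letter $0$ (rather than $1$) exactly $n$ positions before the $\$$. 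These conditions are pairwise disjoint, so the state languages are pairwise disjoint. The main obstacle I expect is the suffix-handling case analysis in the $\supseteq$ direction, which is precisely what the extra $\$$-transitions out of $q_n$ and the $0$-transition $f\xrightarrow{0}q_1$ are designed to support.
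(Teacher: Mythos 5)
Your proposal is correct and takes essentially the same approach as the paper: you characterise, for each state, the (necessary) shape of words accepted from it, observe that these conditions are pairwise disjoint to get separation, and use the same characterisation to choose the right $\$$-successor of $q_n$ (or the $f\xrightarrow{0}q_1$ escape) when constructing accepting runs for $\lang_n^r$. Your explicit case analysis on the gap $m$ to the next $\$$ is just a more concrete rendering of the paper's ``pick the successor whose state-language contains the tail'' recipe.
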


With Theorem \ref{theorem:GFM-exponential}, this lemma provides:

\begin{corollary}\label{cor:separated-reach-succinct}
Separating automata that recognise reachability languages are exponentially more succinct than GfM automata.    
\end{corollary}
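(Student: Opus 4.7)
The plan is to verify two properties of $\mathcal R_n'$: that its language is $\lang_n^r$, and that distinct states have pairwise disjoint languages. For the language equality, the inclusion $\lang(\mathcal R_n') \subseteq \lang_n^r$ is the easy direction, since $q_n$ and $f$ are the only accepting states and the only path into either is $q_0 \to \cdots \to q_0 \to q_1 \to \cdots \to q_n$ followed by a $\$$-edge out of $q_n$; hence every accepting run has a prefix of shape $\{0,1\}^* \cdot 1 \cdot \{0,1\}^{n-1} \cdot \$$, i.e., in $\lang_n$, with the remaining suffix arbitrary.

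For the reverse inclusion, given $w = u \cdot 1 \cdot v \cdot \$ \cdot \alpha \in \lang_n^r$ with $u \in \{0,1\}^*$ and $v \in \{0,1\}^{n-1}$, I would exhibit an accepting run: loop in $q_0$ while reading $u$, switch to $q_1$ on the distinguished $1$, count through $v$ into $q_n$, and then process $\alpha$ segment by segment. Because no state in $\{q_0, q_1, \ldots, q_{n-1}, f\}$ carries a $\$$-edge, the run must sit in $q_n$ at every $\$$ of $w$. For a finite segment of length $l$ between two consecutive $\$$s, the strategy is: if $l \le n$, take the edge $q_n \to q_{n-l}$ on $\$$ so that the $l$ forced steps return the run to $q_n$; if $l > n$, inspect the $(l-n+1)$-th letter of the segment and detour either through $q_0$ (using $q_0 \to q_1$ on $1$, when that letter is $1$) or through $f$ (using $f \to q_1$ on $0$, when that letter is $0$), so that $q_1$ is entered exactly $n-1$ letters before the next $\$$. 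If $\alpha$ has only finitely many $\$$s, after the last one take $q_n \to f$ on $\$$ and loop at $f$ on $\{0,1\}$ forever; in either case an accepting state is visited infinitely often.

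For separation, the plan is to assign each state a signature on $\omega$-words and show the signatures are pairwise exclusive. From $q_i$ with $1 \le i \le n$ no looping is possible before $q_n$ is reached, so the first $\$$ of any accepted word lies at the exact position $n - i + 1$; in particular, from $q_n$ the very first letter must be $\$$. From $q_0$ the only route into the counting segment before any $\$$ is $q_0 \to q_1$ on $1$, pinning the $n$-th letter before the first $\$$ to $1$, and a $\$$ must eventually be consumed (otherwise no accepting state is visited infinitely often). From $f$ the analogous role is played by $f \to q_1$ on $0$, so either the word has no $\$$ (and is accepted by looping at $f$) or the $n$-th letter before the first $\$$ is $0$. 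These signatures are pairwise disjoint: $q_n$ is singled out by an initial $\$$; the $q_i$ for $1 \le i \le n-1$ by the first $\$$ sitting at positions $2, \ldots, n$ respectively; $q_0$ versus $f$ by the $n$-th letter before the first $\$$ being $1$ rather than $0$; and $f$ alone accepts $\$$-free inputs.

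I expect the reverse inclusion of the language equality to be the main obstacle. The delicate point is that $q_n$ is the unique state that can consume $\$$, so every $\$$ of $w$ must be read at $q_n$; the two pivot edges $q_0 \to q_1$ on $1$ and $f \to q_1$ on $0$ together need to accommodate both possible letters at the critical position $l - n + 1$ of each long segment, and verifying that this case split genuinely covers every segment length and letter configuration, including the degenerate cases $l = 0$ and tails with only finitely many $\$$s, requires careful bookkeeping.
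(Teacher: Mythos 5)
Your approach matches the paper's: the corollary rests on showing that $\mathcal R_n'$ is a separating automaton recognising $\lang_n^r$ (the paper's \cref{lem:Rn'-separating}) and then invoking the $2^n$ lower bound of \cref{theorem:GFM-exponential}; your signature-based separation argument is the same case analysis as the paper's, merely organised around the position of the first $\$$ and the letter $n$ places before it. One boundary case in your recipe for the reverse inclusion is wrong as stated: for a segment of length exactly $l=n$ you prescribe the edge $q_n\to q_{n-l}=q_0$ on $\$$ followed by ``forced'' steps, but from $q_0$ the steps are not forced, and if the first letter of that segment is $0$ the run can never leave $q_0$ in time and blocks at the segment's closing $\$$. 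The case $l=n$ belongs with your $l>n$ case (the pivot letter is then the first letter of the segment): route through $q_0$ on the $\$$ if that letter is $1$ and through $f$ on the $\$$ if it is $0$, both edges being available from $q_n$. With that off-by-one repaired the argument is correct; you should also state explicitly that $\mathcal R_n'$ has $n+2$ states, so that the comparison with the $2^n$ lower bound actually yields the exponential gap claimed in the corollary.
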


\section{Conclusion}
We have established that GfM automata are exponentially more succinct than GfG automata, while general nondeterministic automata are exponentially more succinct than GfM automata. For the latter, we have shown that this is still the case when we restrict the class of general nondeterministic automata further, requiring them to be strongly unambiguous separating safety automata, unambiguous reachability automata, or separating automata that recognise a reachability language.

For all of these automata, we have provided very simple families of automata that witness the exponentially succinctness advantage.
\newpage\bibliography{bib}

\newpage\appendix\label{section:appendix}
\section{Proofs of \cref{section:GfMvsNondeterminism}}
\label{appendix}
\lemSnStrongUnambiguous*

\begin{proof}
It is obviously a safety automaton, as all states are final.

To see that $\mathcal S_n$ is separating, for all $i>0$ the language accepted from state $q_i$ is $\{0,1\}^{i-1}\$^\omega$,
while the language from state $q_0$ contains only words, whose first $n$ letters are in ${0,1}$ (as it would block on any word containing a $\$$ in its first $n$ letters).

To see that $\mathcal S_n$ is strongly unambiguous, we distinguish two types of words: those that do, and those that do not contain a $\$$. For the latter, $q_0^\omega$ is the only run, as all other runs that leave $q_0$ continue with $q_1,q_2,\ldots,q_n$ and then block.
For the former, as the automaton blocks when reading a $\$$ from every state except $q_n$, the only possible run is to move to $q_1$ exactly $n$ steps before the first $\$$ and then continue with $q_2,q_3,\ldots,q_{n-1},q_n^\omega$. (This will only be a run if the word is in $\lang_n^s$, but we only have to show that there is no other run.)
\end{proof}

\lemRnUnambiguous*

\begin{proof}
It is obviously a reachability automaton, as it has only a single accepting state, which is a sink.

To see that $\mathcal R_n$ is unambiguous, we look at a word in the language $\lang_n^r$ of the automaton. A word in $\lang_n^r$ has the form $\{0,1\}^m1\{0,1\}^{n-1}\}\$\{0,1,\$\}^\omega$, and its only accepting run is $q_0^{m+1},q_1,q_2,\ldots,q_n,f^\omega$:
any run that would move to $q_1$ earlier would block on reading a $0$ or $1$ while in $q_n$, while any run that would move to $q_1$ either later or not at all would block when reading a $\$$ in a state $q_i$ with $i<n$.
\end{proof}

\lemRnpSeparating*

\begin{proof}
To see that $\mathcal R_n'$ is separating, we note that
\begin{itemize}
    \item words starting with $\$$ can only be accepted from $q_n$, as the automaton blocks when reading $\$$ from any other state,
    \item words starting with $\{0,1\}^i\$$, with $0 < i < n$, can only be accepted from $q_{n-i}$, because
    \begin{itemize}
        \item when starting at $q_j$ with $j > n-i$, the automaton would block when reading a $0$ or $1$ after reaching $q_n$, while
        \item for all other states, the automaton would block when reading the first $\$$ while not having reached $q_n$,
    \end{itemize}
        \item words starting with $\{0,1\}^i1\{0,1\}^{n-1}\$$ can only be accepted from $q_0$, because
        \begin{itemize}
            \item from $q_i$ with $0<i \leq n$, the automaton would block when reading the $(n+1-i)$-th letter which is a $0$ or a $1$, while
            \item from $f$, it would block when moving upon reading the $j$-th letter with $j\leq i$ to $q_1$  when, reading the $(n+j)$-th letter (a $0$ or $1$) in state $q_n$ or, where $j>i+1$ (or where the automaton never moves to $q_1$), when reading the $(1+i+n)$-th letter, a $\$$, in a state in $q_k$ with $k<n$ (or $f$), 
        \end{itemize} 
        \item words starting with $\{0,1\}^i0\{0,1\}^{n-1}\$$ can only be accepted from $f$, using a similar argument, and 
        \item finally, words containing no $\$$ can only be accepted from $f$, because the only final state reachable from any other state of the automaton is $q_n$, and from $q_n$ the automaton would block, as it only accepts a $\$$ from there.
\end{itemize}
This partitions the set of words.
To see that $\mathcal R_n'$ recognises $\lang_n^r$, we note that, for all of the words that do contain a $\$$, this provides a recipe to be in $q_n$ when this $\$$ arrives from \emph{some} state, while for a word that does not contain a $\$$, we accept from $f$ by simply staying there for ever.
For a word in $\lang_n^r$, we can therefore follow this recipe until a $\$$ arrives, and then pick a successor state $q$ so that the tail of the word (after that $\$$) is in the language that can only be accepted from $q$, and so forth.
This constructs an accepting run.
(Note that we have already shown that words not in $\lang_n^r$ cannot be accepted from the only initial state, $q_0$.)
\end{proof}

\end{document}